\newcommand{\prlsection}[1]{{\em {#1}:--~}}
\newtheorem{lemma}{Lemma}
\newtheorem{prop}{Proposition}
\DeclareMathOperator*{\Tr}{Tr}
\DeclareMathOperator*{\Sp}{Sp}
\DeclareMathOperator{\diag}{diag}
\DeclareRobustCommand\openzero{\leavevmode\hbox{0\kern-.55em0}}
\mathchardef\minus="002D
\newcommand{\ket}[1]{|{#1}\rangle}
\newcommand{\bra}[1]{\langle{#1}|}
\newcommand{\mean}[1]{\langle{#1}\rangle}
\newcommand{\sket}[1]{|{#1})}
\newcommand{\sbra}[1]{({#1}|}
\newcommand{\sbraket}[2]{({#1}|{#2})}
\newcommand\V[1]{\boldsymbol{\mathbf{#1}}}
\begin{document}

\title{Quantum information-geometry of dissipative quantum phase transitions}

\author{Leonardo Banchi}
\affiliation{Institute for Scientific Interchange Foundation, Via Alassio 11/c 10126 Torino, Italy}
\author{Paolo Giorda}
\affiliation{Institute for Scientific Interchange Foundation, Via Alassio 11/c 10126 Torino, Italy}
\author{Paolo Zanardi}
\affiliation{
Department of Physics and Astronomy, and Center for Quantum Information Science \& Technology, University of Southern California, Los Angeles, CA 90089-0484}
\affiliation{    Centre for Quantum Technologies,  National University of Singapore,  2 Science Drive 3, Singapore 117542}

\date{\today}

\begin{abstract}
  A general framework for analyzing the recently discovered
  phase transitions in the steady state of dissipation-driven open quantum
  systems is still missing.
 In order to fill this gap we extend the so-called fidelity approach to quantum phase transitions to  open systems whose steady state is a Gaussian
  Fermionic state. We endow the  manifold of correlations matrices of steady-states with a metric tensor $g$ measuring
the distinguishability  distance  between solutions corresponding to different set of control parameters.
The phase diagram can be then mapped out  in terms of the scaling-behavior of $g$ and connections with the Liouvillean gap and the model correlation functions  unveiled.
We argue that the fidelity approach, thanks to its differential-geometric  and information-theoretic nature, provides novel insights
  on dissipative quantum critical phenomena as well as a general and powerful strategy to explore them.
\end{abstract}

\maketitle

\prlsection{Introduction}
The occurrence of typical equilibrium phenomena in out equilibrium driven condensed matter systems
(e.g. long range order, topological order, quantum phase transitions)
has been recently discovered \cite{prosen2008quantum,diehl2008quantum,diehl2010dynamical,dalla2010quantum}. This poses new, fascinating and challenging problems both
at the theoretical and at the experimental level. Indeed, it has been shown that dissipation processes can in principle be controlled and tailored
in order to compete with systems free evolution and to realize fundamental protocols such as quantum state preparation \cite{kastoryano2011dissipative},
quantum simulation \cite{barreiro2011open}, and computation \cite{verstraete2009quantum}.
The natural question that arises is whether and how the methods typically used in the equilibrium realm can be
adapted to characterize non-equilibrium problems.
In particular, the occurrence of quantum phase transitions (QPTs) in non-equilibrium steady states (NESS) which are the
results of complex many-body dissipative evolutions is far from being understood and we still lack a comprehensive and systematic framework
able to link equilibrium and  non-equilibrium properties.\\
In this Letter we propose a new information-geometric strategy for describing
NESS-QPT  based on the study of a quantity borrowed from quantum information
theory, i.e. the fidelity $\mathcal F$
between quantum states. This general approach has been so far successfully
applied to a large variety of ground state QPTs (GS-QPTs) \cite{zanardi2006, zanardi2007information, campos2007quantum, zanardi2007bures} and quantum chaos \cite{giorda2010quantum}.
In the context of NESS-QPTs the set of (control) parameters
$\lambda \in \mathcal{M}$ defines a Liouvillean superoperator $\mathcal L(\lambda)$ which drives the system, independently of the chosen initial state, to the corresponding (unique) NESS $\rho(\lambda)$. Depending on $\lambda$ the NESS can exhibit quite different properties and the system can exhibit NESS-QPTs. The main idea behind the fidelity approach is the following: when dramatic structural changes occur in $\rho(\lambda)$, e.g. approaching a critical point, the geometric-statistical distance $d[\rho(\lambda),\rho(\lambda+\delta \lambda)]$ between two infinitesimally close states grows as they become more and more statistically distinguishable. 
Although there are several metrics in information geometry 
\cite{petz2008quantum,amari2000methods,bengtsson2006geometry}  
for (mixed) density operators $\rho(\lambda)$ 
\cite{braunstein1994statistical}, 
here we concentrate on the Bures metric 
$ds^2_B = 2[1-\mathcal F(\rho, \rho+d\rho)]$. The latter is written in terms of
the Uhlmann fidelity \cite{uhlmann1976transition} $\mathcal F$, 
and, in turns, represents the natural measure of {\em distinguishability}.
The infinitesimal distance $ds^2_B$, when expressed in terms of the parameters $\lambda$, provides a metric $g$ onto the parameters manifold $\mathcal{M}.$
The tensor $g$ is the fundamental tool of the fidelity approach: it has been shown that the study of its scaling behaviour (extensive vs. superextensive) allows a systematic study of GS-QPTs \cite{campos2007quantum,polko2013}. 

Dissipative QPTs are of a different nature of the standard 
QPTs at zero temperature.
Accordingly, in spite of some obvious yet somewhat superficial similarity, 
their understanding calls for  a different set of conceptual as well as 
mathematical tools. In the first place, stationary states are the result
of an equilibration process: NESS-QPTs needs a new equilibration
time after the perturbation and, as such, they are not a result of an 
{\it adiabatic} reorganization of the (ground) state.
From a mathematical point of view, a NESS
is the zero eigenvalue {\em density matrix} of the 
{\em non-hermitean} Liouvillean superoperator ${\cal L} $,  as opposed
to pure eigenvectors of an Hermitian Hamiltonian operator $H$.
This implies that, on the one hand, one has to employ the more sophisticated 
information-geometry of mixed states and, on the other hand, that the whole 
wealth of powerful results stemming out of Hermiticity, 
e.g. spectral theorem and perturbation theory, are in the dissipative 
case simply not available. 
The challenge is here to find out a suitable way to parametrize 
the manifold of stationary states
and pull-back into the parameter manifold the state metric. 
This is in general a quite daunting task, but restricting to the physically
relevant case of quadratic Liouvillean can be achieved. 
Specific models belonging to this class indeed display rich non-equilibrium features and NESS-QPTs, which have been characterized by studying long range magnetic correlations (LRMC) and the Liouvillean spectral gap $\Delta_{\mathcal{L}}$ \cite{prosen2008quantum,horstmann2013noise}.

We derive a general formula for the Bures distance over the set of 
Gaussian Fermionic (GF) states and the metric tensor $g$ over the
parameter manifold. Then
we discuss how the scaling of the metric implies both the closing of 
$\Delta_{\mathcal L}$ and the divergences of some two-point correlations.
Finally we apply our theoretical framework to exactly solvable models. 
Our analysis  demonstrates  that the NESS phase diagram can be accurately mapped by studying  the (finite-size) scaling behaviour of the metric tensor $g;$
 critical lines can be identified and the different phases distinguished.

\prlsection{Bures metric for Gaussian Fermionic states}
The calculation of the Bures distance is a notoriously hard task for large
Hilbert spaces: 
standard methods \cite{braunstein1994statistical} are computationally not
applicable for many-body systems and finding an efficient way to evaluate
$ds_B^2$ is still a subject of active research \cite{ercolessi2013symmetric}.
Here we show a compact and efficient way to evaluate the Bures metric 
(for convenience we use a rescaled metric $ds^2=8\,ds^2_B$)
when the state space is restricted to the physically 
important case of Gaussian Fermionic states.
Consider a system of $n$ Fermion modes described by a set of $2n$ Majorana
operators $w_i$. These operators are Hermitian, linearly depend on the
Fermionic creation and annihilation operators via
$w_\ell=f_\ell+f_\ell^\dagger$, $w_{n+\ell} = i(f_\ell-f_\ell^\dagger)$,
$\ell=1\dots n$, and satisfy the algebra
$\{w_i, w_j\}=2\delta_{ij}$.
A GF-state $\rho$, i.e. a Gaussian state in terms of the operators $w_j$, 
is completely specified by the two-point correlation functions
$C_{ij}=\frac12\langle[w_i,w_j]\rangle_\rho$, where
the complex $2n\times 2n$ matrix $C$ is imaginary
and anti-symmetric.

With this natural parametrization the metric can be pulled back from the
many-body Liouville space to the manifold of the two point correlation 
functions. Indeed, in the Supplementary Material (SM) we have shown that 
the fidelity metric around the GF-state $\rho$ specified
by the correlation function $C$ is given by
\begin{equation}
ds^2=\Tr\left[dC(\openone-{\rm{Ad}_C})^{-1}dC\right]
\ =:\|(\openone-{\rm{Ad}_C})^{-\frac12}\,dC\|^2_2
\label{e.bures}
\end{equation}
where ${\rm{Ad}_C}X{:=} C X C^\dagger{=}CXC$ is the adjoint
action and $^{-1}$ refers to the pseudo-inverse.
In particular, when $\rho$ is pure, $\Sp(C)=\{\pm 1\}$ and the above
equation reduces to $ds^2_{pure}{=} \|dC\|_2^2/2$.

This is {\em per se} an interesting novel result
but it is just the first 
step of our analysis.
In fact the crucial physical information is contained in 
the external parameters  $\{\lambda_\mu\}\in\mathcal M$ of the
model. As $dC=\sum_\mu d\lambda_\mu\partial_\mu C $ we obtain 
\begin{align}
  ds^2 &= \sum_{\mu,\nu}g_{\mu\nu}d\lambda_\mu d\lambda_\nu~,&
g_{\mu\nu}&=\sum'_{rs}\frac{(\partial_\mu C)_{rs} (\partial_\nu C)_{sr}}{1-c_r c_s} ~,
\label{e.buresbasis}
\end{align}
where $C = \sum_{r} c_r\ket r\bra r$, with $c_r\in\mathbf{R}$ and
$(\partial C)_{rs}=\bra r\partial C\ket s$, i.e. the sum in the above equation
is performed in the basis in which $C$ is diagonal and it is restricted over
the elements such that $c_r c_s\neq 1$.
The infinitesimal distance $ds^2$ encodes the statistical distinguishability between two infinitesimally close Gaussian Fermionic states; this result is
completely general and it can be used to study the geometrical properties of manifolds of GF-states.
Eqs.\eqref{e.bures} and \eqref{e.buresbasis} provide the basic tool for studying the phase transitions occurring when the NESS are GF-states.
In this respect, a first qualitative indication that the scaling behaviour 
of the metric can spot QPTs 
is suggested by the following inequality (see SM):
$ds^2 \le 2n\, P_C\,\|dC\|_\infty^2$, where 
$P_C= \|(\openone+C^{\otimes 2})^{-1}\|_\infty$ and 
$\|A\|_\infty$ refers to the maximum singular value of $A$.
If 
$P_C{=}O(1)$
a superextensive behaviour of $ds^2$ implies some sort of singularity in the correlation functions that may reflect the occurrence of a phase transition.

\prlsection{Dissipative solvable model}
We consider a Markovian dissipative open quantum system
evolution \cite{breuer2002theory} governed by the Lindblad master equation
\begin{align}
  \frac{d\rho}{dt} = \mathcal L \rho := -i[\mathcal H, \rho]
  +\sum_\mu\left(
  2L_\mu\,\rho\,L_\mu^\dagger -\{L_\mu^\dagger\,L_\mu, \rho\}\right)~,
  \label{e.lind}
\end{align}
with a quadratic Hamiltonian $\mathcal H=\sum_{ij} H_{ij} \,w_iw_j$ and
linear Lindblad operators $L_\mu = \sum_i \ell_{\mu i} \,w_i$, where
the matrices $H$ and $\ell$ depend on the parameters
$\V\lambda\in\mathcal M$ defining the specific model.
In the following we obtain the steady state $\Omega$, namely the state for
which $d\Omega/dt = \mathcal L\Omega =0$, and pull back the set of admissible
NESS to the parameter manifold. 
The Liouvillean 
can be written as a quadratic form in terms of the following
set of $2n$ creation and annihilation superoperators
\begin{align}
a^\dagger_j \,\cdot = -\frac{i}2 \,W\,\Big\{w_j, \cdot\Big\}~,   &&
a_j \,\cdot = -\frac{i}2 \,W\,\Big[w_j, \cdot\Big]~,
\label{e.super}
\end{align}
where $W=i^n\prod_{j=1}^{2n} w_j$ is a Hermitian idempotent operator
which anti-commutes with all the $w_j$. 
A direct calculation proves that
the operators defined in Eq.~\eqref{e.super} satisfy the canonical
anti-commutation relations (CAR), $\{a_j^\dagger, a_k\} = \delta_{jk}$, and
that
$
  \mathcal L = -\sum_{ij} \left(X_{ij}\,a_i^\dagger a_j +
  Y_{ij}\,a_i^\dagger a_j^\dagger/2\right)
$, 
where $X=4(iH+\Re M) \equiv X^*$, $Y=-8i\Im M \equiv -Y^*\equiv -Y^T$,
with $M_{ij}=\sum_\mu \ell_{\mu i} \ell_{\mu j}^* \equiv M^\dagger$.
This result was derived in \cite{prosen2008third},
but thanks to our definition \eqref{e.super},
complex transformations \cite{zunkovic2010explicit}
for unifying the different parity sectors are avoided.
The {\it two-point} correlation functions in the steady state,
$C_{ij} = \langle[w_i,w_j]\rangle$, are obtained from the
solution of the following Sylvester equation \cite{zunkovic2010explicit}
\begin{align}
  X\, C + C\,X^T = Y~.
  \label{e.C}
\end{align}
As shown in the SM the 
matrix $C$ also plays a
central role in the diagonalization of the Liouvillean. In order to simplify
our analysis we assume the real matrix $X$ to be diagonalizable,
i.e. $X=UxU^{-1}$ for $x=\diag(\{x_i\})$, $x_i\in{\bf C}$, as this condition
is always satisfied in our numerical simulations; 
the general (non-diagonalizable) case is discussed in the SM. 
The transformation
${\V d} = U^{-1}\left({\V a}+C{\V a}^\dagger\right)$,
${\V d}^\times = U^T{\V a}^{\dagger}$,
realizes a non-unitary Bogoliubov transformation
and brings $\mathcal L$ to the diagonal
form $\mathcal L = -\sum_k x_k\, d_k^\times d_k$. 
The (unnormalized) 
steady state $\Omega$ is then obtained as the ${\V d}$-vacuum,
($d_i \Omega = 0$, $\forall j=1,\ldots2n$), i.e.
\begin{align}
  \Omega = 
  e^{-\frac12 {\V a}^\dagger\cdot C{\V a}^\dagger}\!({\openone})~.
  \label{e.ss}
\end{align}
where the identity operator is the $\V a$-vacuum.
The physical conditions for the existence and uniqueness of the steady
state are given in \cite{prosen2010spectral}:
if $\Delta := 2\min_i \Re(x_i) > 0$ then the solution of \eqref{e.C} is
unique and every initial state converges for
$t{\to}\infty$ to the unique steady state \eqref{e.ss}. The gap $\Delta$
represents both the inverse of the time-scale for reaching the steady state
and the gap of the Liouvillean:
  $ \min \{|\sum_{j} x_j n_j| \colon n_j {\in} \{0,1\}~\} \equiv \Delta$.\\
If $\Delta > 0$ the steady state  ${\Omega(\V\lambda)}$  is unique and, since  $\mathcal L$ smoothly depends on the parameters $\V\lambda\in\mathcal M$, it is smooth function of $\V\lambda$ \cite{magnus1985differentiating}.
If the gap $\Delta(n){\to} 0$ for $n{\to}\infty$
the steady state ${\Omega(\V\lambda)}$ may become a {\em non-differentiable} function of
$\V\lambda$.
However, NESS-QPT are {\it not} defined by the closing of the Liouvillean gap.
Nevertheless, the scaling properties of $\Delta(n)$ have been used as indicators of NESS-criticality 
\cite{prosen2010exact,vznidarivc2011solvable,horstmann2013noise,cai2013algebraic}.
Motivated by this, we derived in SM the following upper bound which relates the behaviour of $\Delta(n)$  and $ds^2$:
\begin{align}
  \frac{ds^2}n \le 2\frac{P_C}{\Delta^2}\,
  \left(\|dY\|_\infty+2\|dX\|_\infty\right)^2~.
  \label{e.dsbound}
\end{align}
The latter is the dissipative analogue of the GS-QPT one given in \cite{campos2007quantum}, where it was shown that superextensivity of $ds^2$
implies the closing of the Hamiltonian gap  \cite{campos2007quantum} and the occurrence of criticality. Here the bound intriguingly links the geometric quantity $ds^2$ to the dynamical property $\Delta$, and it provides the following information: if the numerator of the RHS in \eqref{e.dsbound} is $O(1)$  then
any superextensive behaviour of $ds^2 = O(n^{\alpha+1}), \, \alpha>0$ implies that the Liouvillean gap $\Delta$ closes
at least as $O(n^{-\alpha/2})$.
Therefore the geometric properties of the NESS manifold set the minimal
time scales for the reaching of the steady state.
In the next sections we specialize our results to particular solvable instances
of \eqref{e.lind} and we perform numerical and analytical analyses aiming at validating the importance and usefulness of the fidelity approach to NESS-QPT and at comparing the scaling properties of $\Delta$ and $ds^2$.

\prlsection{Boundary driven XY spin chain}
We now concentrate on a solvable spin-$\frac12$ model exhibiting a NESS-QPT 
\cite{prosen2008quantum}. Coherent interactions are described
by the XY Hamiltonian
\begin{align}
  H=\sum_{i=1}^{n-1}\left(\frac{1+\gamma}2\sigma_i^x\sigma_{i+1}^x+
  \frac{1-\gamma}2\sigma_i^y\sigma_{i+1}^y\right)+h\sum_{i=1}^n\sigma_i^z~,
  \label{e.xy}
\end{align}
where $\sigma^\alpha_j$ are the Pauli operators acting on the $j$-th spin.
The two boundary spins of the chain are coupled to two (thermal) reservoirs
via the Lindblad operators $L^{\pm}_L=\sqrt{\Gamma_L^{\pm}}\sigma^\pm_1$,
$L^{\pm}_R=\sqrt{\Gamma_R^{\pm}}\sigma^\pm_n$, where
$\sigma^\pm_j = (\sigma^x_j+i\sigma_j^y)/2$, and the strengths
$\Gamma_{L,R}^\pm$ depends on the reservoirs parameters as well on their
temperature \cite{zunkovic2010explicit}.
Owing to the Jordan-Wigner transformation, such a model can be
exactly described by a quadratic Majorana master equation \eqref{e.lind}.
The steady state of the resulting dissipative Markovian evolution is
therefore Gaussian and different phases can be identified depending
on the parameters $(h,\gamma)$ of the Hamiltonian \eqref{e.xy}. Along the lines
$h=0$, $\gamma=0$, and for $h > h_c = \vert1-\gamma^2\vert$, magnetic
correlations are short-ranged (SRMC), i.e.  the correlation
functions $C^{zz}_{ij} = \mean{\sigma_i^z\sigma^z_j}$ exhibits an exponential
decay, $C^{zz}_{ij}\approx e^{-\vert i-j \vert/\xi}$ with a localization length
$\xi \approx \sqrt{2h_c/(h-h_c)}/8$.
On the other hand, for $h<h_c$ a phase with
long-range magnetic correlations (LRMC) emerges which is characterized by
non-decaying structures in $C^{zz}_{ij}$ and a strong sensitivity to small
changes of the parameters. Around the critical point $h_c$
one finds a power-law behaviour $C^{zz}_{ij}\approx\vert i-j\vert^{-4}$.

\begin{table}[!t]
  \centering
  \begin{tabular}[t]{|c|c|c|c|c|}
    \hline
    {\bf Phase} & {\bf Parameters} 
    & {$\mathbf{\quad\Delta\quad}$} &
    { $\quad\mathbf{|g|}\quad$} & {\bf Quality of fit}
    \\ \hline
    Critical (*) & $h = 0$ & $n^{-3}$ & $n^6$ & good
    \\ \hline
    Long-range & $0< |h| < h_c$ & $n^{-3}$ & $n^3$ & average
    \\ \hline
    Critical & $|h|\approx h_c$ & $n^{-5}$ & $n^6$ & bad
    \\ \hline
    Short-range & $|h| > h_c$ & $n^{-3}$ & $n$ & good
    \\ \hline
    \hline
    Critical (*) & $~\gamma=0,|h| < h_c$ & $n^{-3}~$ & $n^2$ & good
    \\ \hline
  \end{tabular}
  \caption{Scaling analysis of the gap $\Delta$ and of the maximum eigenvalue
    of the fidelity metric $g_{\mu\nu}$.
    These laws does not depend on the
    particularly chosen rate $\Gamma_{L,R}^\pm$. (*) The lines
    $h=0$ and $\gamma=0$ consists
    of a SRMC region embedded in the LRMC phase; one finds
    (see discussion in the text) $|g|\approx g_{hh}$ for $h=0$ and
    $|g|\approx g_{\gamma\gamma}$ for $\gamma=0$.
  }
  \label{t.scaling}
\end{table}

In Table~\ref{t.scaling} we summarize the scaling analysis performed.
Our  results show that the
Liouvillean gap and the metric encode different information.
Indeed, unlike the Hamiltonian gap ruling ground state QPT, the Liouvillean
gap $\Delta$ closes for $n\to\infty$ both at the critical point and for
$h\neq h_c$, both in the LRMC and SRMC phase .
As the reservoirs acts only at the boundaries of the spin chain
the eigenvalues $x_k$ of the matrix $X$ for $n\gg 1$ are a small perturbation
of the $n{\to}\infty$ case where $x_k=\pm4i\omega_k$, being
$\omega_k = \sqrt{(\cos k-h)^2+\gamma^2\sin^2k}$
the quasi-particle dispersion relation of
the Hamiltonian \eqref{e.xy}. In particular $x_k$ gains a small real part and
one finds a gap $\Delta =O(n^{-3})$ for $h\neq h_c$ and $\Delta = O(n^{-5})$
for $h=h_c$.
Therefore the scaling of the Liouvillean gap allows one to identify
the transition form the SRMC phase to the LRMC phase only along the
critical line $h=h_c$, while the transition
occurring  at the $h=0$ (or $\gamma=0$) line can only be appreciated by
evaluating the long-rangeness of the magnetic correlations.
The question that naturally arises is how the different phases and transitions 
can be precisely characterized in a way similar to what happens for
GS-QPTs.
This question becomes more compelling if one compares the above results with the scaling of the geometric tensor  $g_{\mu\nu}$, and in
particular of its largest eigenvalue $|g|$, see Table~\ref{t.scaling},
and Fig.~\ref{f.scaling} for specific values of the parameters.\\
A first important result is that the tensor $g$ is able to identify the
transitions between SRMC and LRMC phases.
On the "transition lines" $h=0$ and
$h=h_c$ one has that $|g|=O(n^6)$, while in the rest of the phase diagram
$|g|<O(n^6)$. Furthermore, a closer inspection of the elements of $g$ shows that
while $g_{hh}(h=0, \gamma) = O(n^6)$,  one has that
$g_{\gamma\gamma}(h=0,\gamma) = O(n)$:
the scaling is superextensive only if one moves away from the
line $h=0$ ($g_{hh}$) and enters in the LRMC phase, while if one moves along
the $h=0$ line ($g_{\gamma\gamma}$) i.e., if one remains in the SRMC phase,
the scaling is simply extensive and it matches the scaling displayed in
the other SRMC phase $h > h_c$.
On the other hand, the transition occurring at $\gamma=0$  
has a different scaling: 
$g_{\gamma\gamma} = O(n^2)$ while $g_{hh}\approx 0$. 
These findings can be further confirmed by a detailed study \cite{ToBePublished} based on the analytical results available for $\gamma \ll 1$ \cite{zunkovic2010explicit}.
It turns out that the introduction of the magnetic field or the anisotropy 
drives different transitions whose specificity is accounted for by the
different superextensive scalings. 
\\
Another important result shown in Table~\ref{t.scaling} is that the metric
tensor is able to signal the presence of long-range correlations:
within the LRMC phase $ds^2$ scales superextensivity as $|g|=O(n^3)$,
and this superextensive behaviour is different from that
displayed at the transition lines.
One is therefore led to conjecture that whole LRMC phase have a critical character, due to the presence of long range correlations.

\begin{figure}[!t]
  \begin{center}
    \includegraphics[width=.5\textwidth]{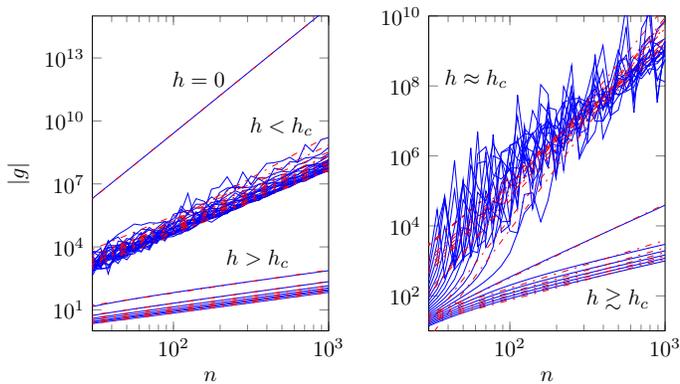}
  \end{center}
  \caption{Scaling of $|g|$ for $\gamma=0.6$ and $h\in[0,0.8]$ (left) and
    for $\gamma=0.5$ and $h\in[0.735,0.755]$ (right). 
    In both cases $\Gamma_L^+ = 0.3$,
    $\Gamma_L^- = 0.5$, $\Gamma_R^+ =0.1$, $\Gamma_R^-=0.5$.  
    Blue curves represent the numerical data,
    while red lines are linear fits, whose results are
    summarized in Table~\ref{t.scaling}.
    $|g|$ slightly fluctuates as a function of $n$
    in the LRMC phase and the relative
    amplitude of the fluctuations increases close to the critical field $h_c$.
    Due to finite size effects and to the differential nature of the
    geometric tensor, the value where $|g|$ takes its maximum is slightly
    smaller than $h_c$, and this difference depends on $n$. 
  }
  \label{f.scaling}
\end{figure}

\begin{figure}[!t]
  \begin{center}
    \includegraphics[width=.5\textwidth]{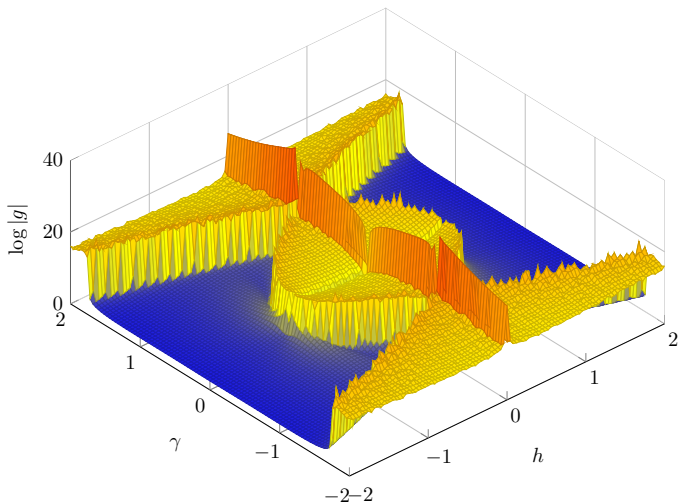}
  \end{center}
  \caption{Maximum eigenvalue $|g|$ of the fidelity metric
    \eqref{e.buresbasis} for $n=250$.
    The Lindblad parameters are the same of Fig.~\ref{f.scaling}.  The larger
    value of $|g|$ close to the phase transition line $h=h_c(\gamma)$ is not
    evident in Fig.~\ref{f.g} because of the numerical mesh and because,
    the actual values of $|g|$ for $h\approx h_c$
    can be comparable to those of the LRMC phase, depending on $n$
    (e.g. see Fig.~\ref{f.scaling}).  The qualitative form of Fig.~\ref{f.g}
    is not affected by different values of the Lindblad parameters
    $\Gamma_{L,R}^\pm$ and by the dimension $n$.
  }
  \label{f.g}
\end{figure}

The findings discussed in the above demonstrate  that the metric tensor $g$, being directly linked to the correlations properties of the Gaussian NESS, encodes  all the relevant information about the dissipative phase transition featured  by the model \eqref{e.xy}; in particular,
the specificity of the different phases (SRMC vs LRMC), and the information about the physical relevant parameters, being them the magnetic field or the anisotropy, that drive the different transitions are properly accounted for.
%
As shown in Fig. \eqref{f.g}, the complete phase diagram can indeed be
reconstructed with the study of the single function $g$.
While these results are specific to the model examined,  the connection established in
\eqref{e.buresbasis} roots the behaviour of $g$ in the correlations properties
of the general class of GF-states. Accordingly, one expects the fidelity approach to have
a broader scope of application.
We would like to stress that there are compelling  questions that are still  unanswered.
In the first place the relation between $g$ and
other relevant quantities that have been used so far to characterize NESS-QPT;
For the model \eqref{e.xy},  these are the range of  correlations,
and the finite-size scaling of Liouvillean gap $\Delta.$
The latter does not entirely capture the criticality phenomenon, and
further investigation of the relation between criticality in NESS-QPT
and geometrical and dynamical aspects is in order \cite{poulin2010lieb}.
Notice also, that, in the   XY model, different type of symmetries (discrete vs. continuous)  are broken  moving away from the $h=0$ or $\gamma=0$ line.
It would be interesting to 
understand whether the  scaling exponents of $ds^2$ at different lines can be related to different non-equilibrium universality classes. 
Extending the present results to non-Gaussian states 
\cite{de2013power} and transitions \cite{prosen2010long} is also an important
future direction.

\prlsection{Translationally invariant case}
In order to support the generality of the geometric approach in understanding
dissipative phase transitions 
we apply our theoretical framework to a different dissipative
model, first introduced in \cite{horstmann2013noise}. 
We consider an XY spin chain on a ring where each site is coupled
to the environment via 
$L_i^+ {=} \sqrt{\Gamma^+} f_i^\dagger$, $L_i^- {=} \sqrt{\Gamma^-} f_i$.
The {\it closed boundary conditions} and the {\it uniform} interaction with the
environment make the phase diagram very different from the previous
one. Indeed, in this particular translationally invariant case, the critical
points match the known values for GS-QPT: for $\gamma\neq 0$ there is a critical
field $h=1$, while in the XX case the whole segment $|h|<1$ is critical. 
In the SM we have proved that the
$|g|=O(n^2)$ for the critical values and $|g|=O(n)$ elsewhere.
The information-geometric content of this dissipative 
phase transition is not as rich as the one in Table.~\ref{t.scaling}, 
and again the scaling of the metric  tensor allows one a precise mapping of the phase-diagram.

\prlsection{Conclusions}
In this Letter we developed  an information-geometric  framework for studying
dissipative critical phenomena exhibit by the non-equilibrium steady states of Markovian evolutions described by quadratic Fermionic
Liouvillean. 
We first derived  a general formula for the infinitesimal Bures distance
between Gaussian Fermionic (mixed) states. This in turn allows one to define
a metric tensor $g$ on the manifold of steady states corresponding to different sets of control parameters.
The intuitive idea underlying is that a transition between two structurally different phases should be reflected by the
statistical distinguishability of pairs of infinitesimally close steady states.
The method does not require the knowledge or the existence of any order parameters,
as the tensor $g$ is directly connected to the two-point correlation functions which define the Gaussian Fermionic steady states.
We have shown that a superextensive behaviour of the tensor $g$, implies some singularity for $n{\to}\infty$ in the derivative of the correlation functions.
We have applied the method to specific (XY) models and
shown that the scaling of the geometric tensor enables one to identify both the critical lines and
to distinguish between different phases characterized by short or  long ranged correlations.
The metric tensor  encodes  also for the direction  of maximal distinguishability in the parameter
manifold, thus allowing a detailed study of the sensitivity of the steady state to small variations of some control parameters.
 This is a crucial point for  experimental applications of dissipative evolution.
%
The scope  of the  information-geometric approach  extends well beyond the important quadratic case analzyed in this paper and may pave the way to the
systematic study of general non-equilibrium critical phenomena. This in turn would allow the investigation of a broad class of systems
and processes which are natural candidates for the preparation of desired quantum states and realization of quantum protocols.

\prlsection{Acknowledgements}
P.Z. was supported by the ARO MURI grant W911NF-11-1-0268 and by NSF grant numbers PHY- 969969 and PHY-803304.

%

\clearpage
\begin{center}
\Large\bf\sc Supplementary Material
\end{center}

\section{Proof of Eq.~(\ref{e.bures})}
We consider a Gaussian Fermionic state written in the following form
\begin{align}
\rho = e^{-\frac{i}4 \sum_{ij} G_{ij}\,w_i w_j}/Z~,
\label{e.gaussian}
\end{align}
where the matrix $G$ has to be real and antisymmetric.
Accordingly $G$ can be cast in the canonical form by an orthogonal matrix $Q$,
i.e.
\begin{align}
G &= Q^T\, \bigoplus_{k=1}^n
\begin{pmatrix} 0&g_k\\-g_k&0 \end{pmatrix}
  \, Q & Q^T=Q^{-1}~,
  \label{e.Q}
\end{align}
and has eigenvalues $\pm ig_k$. Moreover let $z_i = \sum_j Q_{ij} w_j$
be the new Majorana operators. Hence
\begin{align}
\rho &= \frac1Z\prod_k \left[\cosh\left(\frac{g_k}2\right) -i
\sinh\left(\frac{g_k}2\right)\,z_{2k-1} z_{2k}\right]~,
\\
Z &= \prod_k 2\cosh\left(\frac{g_k}2\right) =
\sqrt{\det\left[2\cosh\left(i\frac{G}2\right)\right]}~,
\label{e.gaussianZ}
\end{align}
where we used the fact that the eigenvalues of $iG$ are $\pm g_k$.
As $C_{ij}=\frac12\mean{[w_i, w_j]}=
\frac{2i}Z\frac{\partial Z}{\partial G_{ij}}$ one can show that
\begin{align}
C = \tanh\left(i\frac{G}2\right)~.
\label{e.corrG}
\end{align}
The correlation matrix $C=C^\dagger=-C^T$ is diagonal in the same basis of $G$
and its eigenvalues read $c_k = \tanh(g_k/2)$. Hence
\begin{align}
  \rho &= \prod_k \frac{1-i c_k\,z_{2k-1} z_{2k}}2~,
  \label{e.gaussianC}
\end{align}
where $|c_k|\le 1$. Note
that for $c_k=\pm 1$, one has $g_k = \pm\infty$, making the ansatz
\eqref{e.gaussian} not well defined, unlike Eq.~\eqref{e.gaussianC}.
The latter possibility 
occurs for instance for pure states, as it is clear from the following
explicit expression for
the purity of the states \eqref{e.gaussian} and the states \eqref{e.gaussian} and  \eqref{e.gaussianC}:
\begin{align}
  \Tr[\rho^2] =
  \frac{\det\left[2\cosh\left(i\,G\right)\right]^{\frac12}}{
  \det\left[2\cosh\left(i\frac{G}2\right)\right]} =
  \sqrt{\det\left(\frac{1+C^2}2\right)}~.
  \label{e.purity}
\end{align}

We now derive the proof of Eqs.~\eqref{e.bures} and \eqref{e.buresbasis}, 
dividing the different
steps into three lemmas. At first we assume $c_k\neq \pm1$
and then we extend the result for including pure states.

\begin{lemma}
  Let $\rho, \rho'$ two GF-states \eqref{e.gaussian}
  parametrized by $G,G'$ respectively.  Then
  \begin{align}
    \mathcal F(\rho, \rho') &= \Tr\sqrt{\sqrt\rho\rho'\sqrt\rho}
    \\&= \frac{\det\left[\openone +
      \sqrt{e^{iG/2}e^{iG'}e^{iG/2}}\right]^{\frac12}}{
	\det\left[\openone + e^{iG}\right]^{\frac14}\,
    \det\left[\openone + e^{iG'}\right]^{\frac14}}~.
    \label{e.uhlmann}
  \end{align}
  \label{l.1}
\end{lemma}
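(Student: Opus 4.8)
The plan is to reduce everything to a calculus of Gaussian (quadratic-exponential) operators $\mathcal O(M):=e^{-\frac{i}{4}\sum_{ij}M_{ij}w_iw_j}$ for complex antisymmetric $M$, for which two facts suffice. The first is a trace formula, $\Tr\mathcal O(M)=\det(\openone+e^{iM})^{1/2}$, with the branch fixed by continuity from $M=0$, where $\Tr\mathcal O(0)=2^n=\det(2\,\openone)^{1/2}$. I would prove this by bringing $M$ to the canonical $2\times2$ block form of Eq.~\eqref{e.Q} (allowing complex $g_k$ by analytic continuation) and reducing to the single-mode identity $\Tr[\cosh(g/2)-i\sinh(g/2)\,z_{1}z_{2}]=2\cosh(g/2)$, exactly as in Eq.~\eqref{e.gaussianZ}; note in passing that this gives $Z=\det(\openone+e^{iG})^{1/2}$, equivalent to \eqref{e.gaussianZ} since $2\cosh(iG/2)=e^{-iG/2}(\openone+e^{iG})$ and $\det e^{-iG/2}=1$. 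The second is a composition rule: since the operators $\tfrac12[w_i,w_j]$ span a representation of the orthogonal Lie algebra $\mathrm{so}(2n)$ that is closed under commutation and contains no scalar element, the Baker--Campbell--Hausdorff series produces no central term, so $\mathcal O(M_1)\mathcal O(M_2)=\mathcal O(M_{12})$ with $e^{iM_{12}}=e^{iM_1}e^{iM_2}$ and \emph{no} extra scalar prefactor. Equivalently, conjugation by $\mathcal O(M)$ implements the complex orthogonal rotation $w\mapsto e^{iM}w$, and these rotations compose.

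With these two facts the computation is short. Since $c_k\neq\pm1$ the state $\rho$ has full rank, so $\sqrt\rho=\mathcal O(G/2)/\sqrt Z$ with $Z=\det(\openone+e^{iG})^{1/2}$, and likewise $\rho'=\mathcal O(G')/Z'$. Composing the three factors gives $\sqrt\rho\,\rho'\sqrt\rho=\frac{1}{ZZ'}\,\mathcal O(G/2)\mathcal O(G')\mathcal O(G/2)=\frac{1}{ZZ'}\,\mathcal O(\tilde G)$, where $e^{i\tilde G}=e^{iG/2}e^{iG'}e^{iG/2}=:R$. Because this operator is positive (it equals $\sqrt\rho\,\rho'\sqrt\rho$), functional calculus produces its positive square root simply by halving the exponent, $\sqrt{\sqrt\rho\,\rho'\sqrt\rho}=\frac{1}{\sqrt{ZZ'}}\,\mathcal O(\tilde G/2)$, with $e^{i\tilde G/2}=\sqrt R$ the principal root. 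The trace formula then yields $\mathcal F=\frac{1}{\sqrt{ZZ'}}\det(\openone+\sqrt R)^{1/2}$, and substituting $\sqrt{ZZ'}=\det(\openone+e^{iG})^{1/4}\det(\openone+e^{iG'})^{1/4}$ gives Eq.~\eqref{e.uhlmann}.

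The delicate point, which I expect to be the main obstacle, is branch/sign consistency: the correspondence $e^{iM}\mapsto\mathcal O(M)$ is the double-valued spin cover, so the determinant in the trace formula, the operator square root of $\sqrt\rho\,\rho'\sqrt\rho$, and the matrix root $\sqrt R$ must all be taken on the sheet continuously connected to the identity. I would fix every sign by deforming $G,G'\to0$ (i.e.\ $\rho,\rho'\to\openone/2^n$), where $R\to\openone$, $\sqrt R\to\openone$, and all determinants are manifestly positive, and then propagate the $+$ branch using continuity together with the positivity of $\rho$, $\rho'$ and $\sqrt\rho\,\rho'\sqrt\rho$. This is precisely the parity-sector subtlety that the superoperator construction \eqref{e.super} was designed to sidestep. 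Finally, the hypothesis $c_k\neq\pm1$ (finite $G$) is exactly what makes $\sqrt\rho$ a genuine $\mathcal O(G/2)$; the pure-state limit $c_k=\pm1$, where the ansatz \eqref{e.gaussian} degenerates, is then recovered by letting $g_k\to\pm\infty$ in Eq.~\eqref{e.uhlmann}, to be treated in the subsequent lemmas.
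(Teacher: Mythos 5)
Your proof is correct and follows essentially the same route as the paper's: both rest on the Lie-algebra (BCH) closure of quadratic Majorana exponentials, which makes $\sqrt\rho\,\rho'\sqrt\rho$ again a Gaussian operator with exponent matrix determined by $e^{iG/2}e^{iG'}e^{iG/2}$, together with the normalization/trace formula $Z=\sqrt{\det[2\cosh(iG/2)]}=\det(\openone+e^{iG})^{1/2}$ of Eq.~\eqref{e.gaussianZ}. Your explicit treatment of the spin double-cover sign ambiguity and of positivity (fixing branches by continuity from $G,G'\to 0$) is a point the paper's proof passes over silently, but it refines rather than changes the argument.
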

\begin{proof}
  This lemma is a direct consequence of the fact the quadratic Majorana
  operators form a Lie algebra:
  \begin{align}
    \left[\frac{\V w\cdot A \V w}4, \frac{\V w \cdot B\V w}4\right] =
    \frac{\V w\cdot \left[A, B\right]\V w}4~,
  \end{align}
  and accordingly
  \begin{align}
    e^{\frac{i}4 \V w \cdot A \V w}
    e^{\frac{i}4 \V w \cdot B \V w} &=
    e^{\frac{i}4 \V w \cdot D \V w}~,
    & e^A\,e^B=e^D~.
  \end{align}
  Thanks to the above identity
  \begin{align}
    \sqrt{\sqrt\rho\rho'\sqrt\rho} \propto \exp\left(\frac14\sum_{ij}
    \left(\frac{\log\left[e^{-iG/2}e^{-iG'}e^{-iG/2}\right]}2\right)_{ij}\,
    w_iw_j\right)~,
  \end{align}
  and using \eqref{e.gaussianZ} we find
  \begin{align}
    \mathcal F(\rho, \rho') = \frac{\det\left[\cosh\left(\frac14\log
      e^{-iG/2}e^{-iG'}e^{-iG/2}\right)\right]^{\frac12}}{
	\sqrt{\det\left[\cosh\left(i\frac{G}2\right)\right]^{\frac12}
	\,\det\left[\cosh\left(i\frac{G'}2\right)\right]^{\frac12}}}~,
      \end{align}
    which is equivalent to \eqref{e.uhlmann}.
\end{proof}
A convenient parametrization of Eq.~\eqref{e.uhlmann} is obtained in
terms of the correlation function by defining the new matrix
$T=e^{iG}$. Then
\begin{align}
{C = \frac{T-\openone}{T+\openone}}~, && T^T=T^{-1}~, T^\dagger = T~,
\label{e.cayley}
\end{align}
\begin{align}
  \mathcal F(\rho,\rho') =:
  \mathcal F(T, T') = \frac{\det\left[\openone +
  \sqrt{\sqrt T \,T' \sqrt T}\right]^{\frac12}}{
    \det\left[\openone + T\right]^{\frac14}\,
    \det\left[\openone + T'\right]^{\frac14}}~.
    \label{e.fidT}
\end{align}

The following lemma conveys the metric pull back with in the manifold
of states parametrized by $T$:
\begin{lemma}
Let $ds^2 = 8\,ds^2_B=16[1-\mathcal F(T, T+dT)]$
the fidelity metric around the state \eqref{e.gaussian} pulled back in the
space of the matrices $T$ and let $dT = \partial_\mu T \,d\lambda_\mu$
where $\lambda_\mu\in\mathcal M$ are the parameters
of the model. Then
the fidelity metric can be cast
into the form $ds^2 = \sum_{\mu\nu} g_{\mu\nu}\,d\lambda_\mu\,d\lambda_\nu$
where the geometric tensor is
\begin{align}
  g_{\mu\nu} = 2\sum_{ij}\frac{(\partial_\mu T)_{ij}\,(\partial_\nu T)_{ji}}{
  (1+t_i)(1+t_j)(t_i+t_j)} ~.
  \label{e.metricT}
\end{align}
In \eqref{e.metricT}
the sum is performed in the basis in which $T$ is diagonal, i.e. we set
$T=\sum_i t_i \ket i \bra i$ and $(\partial_\mu T)_{ij}=\bra i
\partial_\mu T\ket j$.
  \label{l.2}
\end{lemma}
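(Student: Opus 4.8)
The plan is to Taylor-expand the explicit fidelity \eqref{e.fidT} to second order in $dT = T' - T$, exploiting that $T = e^{iG}$ is Hermitian and positive (its eigenvalues $t_i = e^{\pm g_k}$ are real and positive by \eqref{e.cayley}), so that $\sqrt T$ and every matrix function below is well defined. I would work throughout in the eigenbasis of $T$, where $dT$ has entries $(dT)_{ij}$ and $N := \openone + T$ is diagonal with entries $1+t_i$. The goal is then to show that the linear term of $\mathcal F$ cancels and that the quadratic term reproduces \eqref{e.metricT}.

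The technical heart of the argument is the perturbative expansion of the matrix square root $\sqrt{\sqrt T\,T'\sqrt T} = \sqrt{T^2 + B}$ with $B := \sqrt T\,dT\,\sqrt T$, i.e. $B_{ij} = \sqrt{t_i t_j}\,(dT)_{ij}$. Writing $\sqrt{T^2+B} = T + S_1 + S_2 + \cdots$ and matching orders in $S^2 = T^2 + B$ yields the Sylvester equations $T S_1 + S_1 T = B$ and $T S_2 + S_2 T = -S_1^2$, solved in the eigenbasis as $(S_1)_{ij} = B_{ij}/(t_i+t_j)$ and $(S_2)_{ij} = -(S_1^2)_{ij}/(t_i+t_j)$. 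Everything downstream is bookkeeping built on these two expressions.

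Next I would expand the three determinants via $\log\det = \Tr\log$. Factoring $\det[\openone + \sqrt{\sqrt T T'\sqrt T}] = \det N\,\det[\openone + N^{-1}S_1 + N^{-1}S_2]$ and $\det[\openone + T'] = \det N\,\det[\openone + N^{-1}dT]$, the $\det N$ prefactors cancel between numerator and denominator of \eqref{e.fidT}, leaving $\mathcal F = \exp\bigl(\tfrac12\Tr[N^{-1}S_1] + \tfrac12\Tr[N^{-1}S_2] - \tfrac14\Tr[(N^{-1}S_1)^2] - \tfrac14\Tr[N^{-1}dT] + \tfrac18\Tr[(N^{-1}dT)^2] + \cdots\bigr)$. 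Using the diagonal value $(S_1)_{ii} = \tfrac12(dT)_{ii}$ one checks that the first-order part $\tfrac12\Tr[N^{-1}S_1] - \tfrac14\Tr[N^{-1}dT]$ vanishes identically, as it must since $\mathcal F$ attains its maximum $1$ at $dT=0$; this is a useful consistency check.

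Finally, the quadratic part $E_2 = \tfrac12\Tr[N^{-1}S_2] - \tfrac14\Tr[(N^{-1}S_1)^2] + \tfrac18\Tr[(N^{-1}dT)^2]$ becomes a sum over $i,j$ of rational coefficients times $(dT)_{ij}(dT)_{ji}$. After symmetrizing in $i\leftrightarrow j$ (legitimate since $(dT)_{ij}(dT)_{ji}$ is symmetric) and placing the three coefficients over the common denominator $8(1+t_i)(1+t_j)(t_i+t_j)^2$, the numerators sum to $-(t_i+t_j)$, giving the compact coefficient $-1/[8(1+t_i)(1+t_j)(t_i+t_j)]$; then $ds^2 = 16(1-\mathcal F) = -16\,E_2$ produces \eqref{e.metricT}. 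I expect the main obstacle to be precisely this algebraic collapse: the individual second-order traces are unilluminating, and only after the $i\leftrightarrow j$ symmetrization and the cancellation of the $t_i^2$, $t_j^2$ and $t_i t_j$ contributions does the clean denominator $(1+t_i)(1+t_j)(t_i+t_j)$ emerge.
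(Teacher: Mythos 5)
Your proposal is correct and follows essentially the same route as the paper's proof: the paper (following Section 3 of Sommers--\.{Z}yczkowski) uses exactly your Sylvester-equation expansion of $\sqrt{\sqrt{T}\,T'\sqrt{T}}$, and its operators $\delta=(1+T)^{-1}dT$ and $\partial=(1+T)^{-1}\bigl(\sqrt{\sqrt{T}\,T'\sqrt{T}}-T\bigr)$ coincide with your $N^{-1}dT$ and $N^{-1}(S_1+S_2)$, leading to the same exponent $\tfrac12\Tr(\partial-\delta/2)-\tfrac14\Tr(\partial^2-\delta^2/2)$ and the same algebraic collapse to the denominator $(1+t_i)(1+t_j)(t_i+t_j)$. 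Your explicit verification of the first-order cancellation and of the numerator identity $-(t_i+t_j+t_i^2+t_j^2)-2t_it_j+(t_i+t_j)^2=-(t_i+t_j)$ checks out.
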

\begin{proof}
Proceeding along the same lines of Section 3 of \cite{sommers2003bures} we
obtain for $T'=T+dT$
\begin{align}
&\sqrt{\sqrt T T'\!\sqrt{T}} = T + \sum_{ij} \ket i \bra j\,
\frac{\sqrt{t_it_j}}{t_i+t_j}  \,dT_{ij} -\\&-
\sum_{ijk}\ket i\bra k \,dT_{ij}dT_{jk} \,\frac{\sqrt{t_it^2_jt_k}}{
(t_i+t_j)(t_j+t_k)(t_i+t_k)}
+\mathcal O\left(dT\right)^3
\nonumber
\end{align}
Owing to the above expression and to Eq.\eqref{e.fidT} the fidelity
$\mathcal F(T,T+dT)$ can be written
in terms of some infinitesimal operators $\delta, \partial$
\begin{align}
\mathcal F(T, T+dT) &\simeq \frac{\det\left[(\openone + T)(\openone+\partial)
 \right]^{\frac12}}{
    \det\left[\openone + T\right]^{\frac14}\,
    \det\left[(\openone + T)(\openone+\delta)\right]^{\frac14}}
 \nonumber\\&= \frac{\det\left[\openone+\partial \right]^{\frac12}}{
    \det\left[\openone+\delta\right]^{\frac14}}
    = e^{\frac12\Tr \log(1+\partial) -
    \frac14\Tr \log(1+\delta)} \nonumber\\&
    \simeq e^{\frac12\Tr\left(\partial-\delta/2\right) -
    \frac14\Tr\left(\partial^2-\delta^2/2\right)}~,
    \label{sgniafuz}
\end{align}
where
\begin{align}
\delta &= (1+T)^{-1} \,dT = \sum_{ij}\ket i\bra j \,\frac1{1+t_i} \,dT_{ij}~,\\
\partial &= (1+T)^{-1}\,\left(\sqrt{\sqrt T T'\!\sqrt{T}}-T\right)
\\\nonumber&= \sum_{ij} \ket i \bra j \,
\frac{\sqrt{t_it_j}}{t_i+t_j} \,\frac1{1+t_i} \,dT_{ij} -\\\nonumber&-
\sum_{ijk}\ket i\bra k \,dT_{ij}dT_{jk} \,\frac{\sqrt{t_it^2_jt_k}}{
(t_i+t_j)(t_j+t_k)(t_i+t_k)} \,\frac1{1+t_i}~.
\end{align}
The elements of Eq.~\eqref{sgniafuz} become
\begin{align}
\Tr(\partial - \delta/2)
 & = -\frac14
\sum_{ij}\left|dT_{ij}\right|^2 \,\frac1{(t_i+t_j)^2} \,\left(
\frac{t_j}{1+t_i} + \frac{t_i}{1+t_j}\right)~,
\\
\Tr \delta^2 &=
\sum_{ij}\left|dT_{ij}\right|^2 \,\frac1{(1+t_i)(1+t_j)}~,
\\
\Tr \partial^2 &\simeq
\sum_{ij}\left|dT_{ij}\right|^2 \, \frac{t_it_j}{
(t_i+t_j)^2} \,\frac1{(1+t_i)(1+t_j)}~,
\end{align}
so that
\begin{align}
\mathcal F(T, T+dT) \simeq
1- \frac18\sum_{ij}
\frac{\left|dT_{ij}\right|^2}{(1+t_i)(1+t_j)(t_i+t_j)}~,
\label{e.fid2order}
\end{align}
which completes the proof.
\end{proof}

Before proving Eq.~\eqref{e.bures} we introduce the following lemma which
will be used for analytical continuations to the pure state manifold:
\begin{lemma}
Let $f(x,y):= (x-y)^2(1-xy)^{-1}$ be a function defined in
$[-1,\,1]^2-\{ z^+,\,z^-\},\, z^\pm:=(\pm 1,\,\pm 1).$ Then
$f(x,\, y)\le 4$ and  $\lim_{(x,y)\to z^\pm } f(x,\,y)=0$.
  \label{l.3}
\end{lemma}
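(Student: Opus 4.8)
The plan is to prove the two claims about $f(x,y)=(x-y)^2(1-xy)^{-1}$ separately: first the uniform bound $f\le 4$ on the square minus the two corner points, then the vanishing limits at those corners.

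For the bound, I would proceed by elementary calculus on the closed square. Since $f$ is symmetric in $x,y$ and non-negative (numerator is a square, and $1-xy>0$ whenever $(x,y)\neq z^\pm$ because $xy<1$ strictly away from the corners on $[-1,1]^2$), it suffices to bound the numerator against $4(1-xy)$. Concretely, I would show the equivalent inequality $(x-y)^2\le 4(1-xy)$, i.e. $x^2+y^2-2xy\le 4-4xy$, which rearranges to $x^2+y^2+2xy\le 4$, that is $(x+y)^2\le 4$. This last inequality is immediate on $[-1,1]^2$ since $|x+y|\le|x|+|y|\le 2$. Thus $f(x,y)\le 4$ wherever the denominator is positive, and one checks the denominator is indeed positive on the punctured square. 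This reduces the whole bound to the trivial observation $(x+y)^2\le 4$.

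For the limits at the corners, I would note that approaching $z^+=(1,1)$ the numerator $(x-y)^2\to 0$ while the denominator $1-xy\to 0$, so the ratio is a genuine $0/0$ indeterminate form requiring care. The clean way is to set $x=1-\epsilon$, $y=1-\eta$ with $\epsilon,\eta\to 0^+$; then $(x-y)^2=(\eta-\epsilon)^2$ and $1-xy=\epsilon+\eta-\epsilon\eta$, so $f=(\eta-\epsilon)^2/(\epsilon+\eta-\epsilon\eta)$. Since $|\eta-\epsilon|\le\epsilon+\eta$ one has $(\eta-\epsilon)^2\le(\epsilon+\eta)^2$, and dividing by the denominator (which is $\ge(\epsilon+\eta)/2$ for small $\epsilon,\eta$) gives $f\le 2(\epsilon+\eta)\to 0$. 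The corner $z^-=(-1,-1)$ is handled identically by the substitution $x=-1+\epsilon$, $y=-1+\eta$, or simply by the symmetry $f(-x,-y)=f(x,y)$.

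The only subtlety — and the step I would be most careful about — is that the limit is asserted as a full two-variable limit, so I must control the ratio uniformly as $(x,y)$ approaches the corner along \emph{every} path, not just along straight lines. The substitution above accomplishes exactly this, since the final bound $f\le 2(\epsilon+\eta)$ depends only on the distance-type quantity $\epsilon+\eta$ and not on the direction of approach; this is what rigorously establishes $\lim_{(x,y)\to z^\pm}f(x,y)=0$ as a genuine limit. Everything else is routine algebra, so I would keep the write-up short, emphasizing the reduction to $(x+y)^2\le4$ for the bound and the path-independent estimate $f\le 2(\epsilon+\eta)$ for the limits.
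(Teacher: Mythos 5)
Your proof is correct. The uniform bound is argued exactly as in the paper: the paper's inequality $1-xy = 1-[(x+y)^2-(x-y)^2]/4 \ge (x-y)^2/4$ is the same reduction to $(x+y)^2\le 4$ that you use, just rearranged. For the corner limits the two arguments share the same spirit (a path-independent, linear-in-distance bound near the corner) but differ in execution: the paper restricts to the quadrant $x,y\ge 0$, writes each nearby point on a line $y=1+m(x-1)$ (or $x=1+m(y-1)$) with $m\in[0,1]$, computes $f(x,y)=(1-x)(1-m)^2/(1+mx)\le 1-x$, and then invokes the $(x,y)\to(y,x)$ and $(x,y)\to(-x,-y)$ symmetries to cover all cases; you instead pass to corner coordinates $\epsilon=1-x$, $\eta=1-y$ and obtain $f\le (\epsilon+\eta)^2/\bigl[(\epsilon+\eta)/2\bigr]=2(\epsilon+\eta)$ in a single estimate, using $\epsilon\eta\le(\epsilon+\eta)/2$ for $\epsilon,\eta\le 1$, with $z^-$ again by symmetry. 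Your version is slightly cleaner in that it needs no case split over which variable to parametrize by and no quadrant restriction, while the paper's line-sweep makes visible that the extremal approach is along the edges ($m=0$, where $f=1-x$ exactly); both deliver the genuine two-variable limit, which is the subtlety you correctly flagged and which the paper handles with its explicit $\epsilon$--$\delta$ statement.
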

\begin{proof}
The upper bound is found thanks to $1-xy=1-[(x+y)^2-(x-y)^2]/4\ge(x-y)^2/4$.
In order to show that $\lim_{(x,y)\to z^\pm } f(x,\,y)=0$
{let us restrict $f$ to the $x\ge 0,\, y\ge 0$ part of the domain to analyse the limit to $z^+$.
The limit $z^-$ follows because of the $(x,\, y)\to (-x,\,-y)$ symmetry of $f.$
 One can write $y=1+ m\,(x-1)$  or $x=1+ m\,(y-1)$ with with $m\in[0,\,1].$ Because of the $(x,\,y)\to (y,\,x)$ symmetry of $f$ we can consider just the first case.
One obtains $f(x,y)= (1-x)\frac{(1-m)^2}{1+m x}\le 1-x$ this quantity in a disk of radius $\delta$ centered on $z^+$ is upper bounded by $\delta$.
This shows that $\forall\epsilon>0,\,\exists \delta=\delta(\epsilon)$ s.t $\|(x,\,y)-z^+\|\le \delta\Rightarrow f(x,\,y)\le \epsilon$ (with $\delta(\epsilon)=\epsilon$), i.e., the claim. }
\end{proof}

\begin{proof}[Proof of Eq.~\eqref{e.bures}]
  Eq.~\eqref{e.buresbasis} is obtained directly from lemma~\ref{l.2}.
  Indeed, from Eq.~\eqref{e.cayley}
\begin{align}
dC=dT\frac1{1+T}-\frac{T-1}{T+1}\,dT\,\frac1{T+1} = 2\frac1{T+1}dT\frac1{T+1}~.
\end{align}
Inserting the above equation in \eqref{e.metricT}, and noting that
$C$ and $T$ are diagonal in the same basis,
$c_i = \frac{t_i-1}{t_i+1}$, one obtains
\begin{align}
g_{\mu\nu} = \sum_{ij}\frac{(\partial_\mu C)_{ij}\,
(\partial_\nu C)_{ji}}{1-c_ic_j}~.
\label{e.metricC}
\end{align}

The singular behaviour of
\eqref{e.metricC} for  $c_i = \pm1$ is just apparent. Indeed,
let $iG|j\rangle=g_j |j\rangle \,(j=1,\ldots,2n),\,{\rm{Sp}}(iG)=\{g_j\}\subset{\mathbf{R}}$ then $C=\sum_j c_j
|j\rangle\langle j|,\,c_j:=\tanh(g_j/2).$
By differentiation
$
dC=\sum_j\left(  (1-c_j^2) \frac{dg_j}{2} |j\rangle\langle j| + c_j (|dj\rangle\langle j|+|j\rangle\langle dj|)\right).
$
One has therefore the following matrix elements $(dC)_{jj}= (1-c_j^2) dg_j $ and $(dC)_{ij}=(c_i-c_j) \langle di|j\rangle,\,(i\neq j).$
Plugging these in \eqref{e.metricC}
\begin{equation}
ds^2=\frac{1}{4} \sum_j (1-c_j^2)\, dg_j^2 +\sum_{i\neq j} 
f(c_i, c_j) \, |\langle di|j\rangle|^2~.
\label{apparent}
\end{equation}
Now one sees easily that for $c_j\to \pm 1$ the first (diagonal) contribution in (\ref{apparent}) vanishes while the second, thanks to lemma~\ref{l.3},
is  upper bounded by $4 \,\sum_{i\neq j} |\langle di|j\rangle|^2$ for {\em all} $c_i, c_j\in(-1,\,1)$ and vanishes for $(c_i,c_j)\to z^{\pm}$:
even if \eqref{e.metricC} has been derived for $C$ such that $c_i\neq \pm1$,
we can perform the limit $|c_i|\to 1, (\forall i)$ and, in this way,
{\em extend} the metric to the pure state manifold
just by setting $c_i c_j$ to $-1$ (as for the case $c_i c_j=1$ gives vanishing contribution).

The basis independent expression Eq.~\eqref{e.bures} follows from
\eqref{e.metricC} 
\begin{equation}
ds^2=\sum_{\mu\nu} g_{\mu\nu} d\lambda_\mu d\lambda_\nu=\langle ({\mathbf{1}}-{\rm{Ad}} C)^{-1}(dC),\, dC\rangle
\label{ds^2}
\end{equation}
where $dC=\sum_\mu d\lambda_\mu \partial_\mu C,$ and $ {\rm{Ad}} C( X):= C X C^\dagger=CXC=(L_C\circ R_C)(X)$ is the adjoint
action. 
To see this let us
first write $dC=\sum_{ij} (dC)_{ij} |i\rangle\langle j|$ where
$C|i\rangle=c_i |i\rangle.$ Then $(1-{\rm{Ad}} C)^{-1}(dC)  =\sum_{ij} (dC)_{ij}  (1-c_i c_j )^{-1} \,|i\rangle\langle j|$
and $\langle (1-{\rm{Ad}} C)^{-1}(dC),\,dC\rangle=  \sum_{ij} {(dC)_{ij}}^*  (1-c_i c_j )^{-1} \langle | i\rangle\langle j|,\,dC\rangle=
\sum_{ij} {(dC)_{ij}}^*  (dC)_{ij} (1-c_i c_j )^{-1}.$
The zero contribution to the sum \eqref{e.metricC} for $c_ic_j=1$ is considered
thanks to the pseudo-inverse.
\end{proof}

One can show that Eq.~\eqref{e.bures} reduces to
the known expressions when $\rho$ is a thermal state
\cite{zanardi2007bures} and when $\rho$ is a pure state
\cite{cozzini2007quantum}, provided that the appropriate matrices
$T$ or $C$ are used. In the next section, this theorem is applied
to NESS-QPT where $C$ is given by the solution of the Sylvester
equation \eqref{e.C}.

\section{Liouvillean steady state}
We call $\mathcal R$ the $4^n$-dimensional operator spaces generated by
$\prod_j w_j^{s_j}$, ($s_j\in\{0,1\}$), and we use the
notation $\sket{\V s}$ for referring to the elements of $\mathcal R$,
normalized with respect to the Hilbert-Schmidt inner product, i.e.
$\sbraket{\V s}{\V s} \equiv \Tr[s^\dagger s]= 1$ for
$\sket{\V s}\in\mathcal R$. 

Following the notation introduced in the Letter, the Liouvillean
$\mathcal L\colon \mathcal R\rightarrow \mathcal R$ introduced in 
\eqref{e.lind} can be written as
\begin{align}
  \mathcal L = -\frac12\begin{pmatrix} {\V a}^\dagger & \V a \end{pmatrix}
  \,
  \begin{pmatrix}
    X & Y \\ 0 & -X^T
  \end{pmatrix}
  \,
  \begin{pmatrix} \V a \\ {\V a}^\dagger \end{pmatrix}  -\frac12 \Tr X~.
\end{align}
The superoperator $a_j^\dagger$ is the Hermitian conjugate of $a_j$ in $\mathcal R$.

If $C$ is the matrix solution of \eqref{e.C} then
 \begin{align}
   \begin{pmatrix}
    X & Y \\ 0 & -X^T
  \end{pmatrix}
  =
  \begin{pmatrix}
    U & -C\,U^{-T}\\0&U^{-T}
  \end{pmatrix}
  \,
   \begin{pmatrix}
     x & 0 \\ 0 & -x
  \end{pmatrix}\,
  \begin{pmatrix}
    U^{-1} & U^{-1}\,C\\0&U^T
  \end{pmatrix}~.
  \label{e.diaglind}
 \end{align}
We show now that the latter transformation is non-unitary Bogoliubov
transformation \cite{blaizot1985quantum} and that everything is consistent.
It is known that non-unitary Bogoliubov transformations are isomorphic to
the group of orthogonal complex matrices $O(4n, \mathbb C)$. This condition
can be expressed in a simple way thanks to
Eq.(2.6) of \cite{blaizot1985quantum}, i.e.
\begin{align}
  \hat V\,\Sigma^x\,\hat V^T &= \Sigma^x~, & \Sigma^x &= \sigma^x \otimes
  \openone_{2n}~.
\end{align}
It is simple to show that the transformation $\hat V$
\begin{align}
      \hat V &=\begin{pmatrix}
	U^{-1} & U^{-1}\,C\\0&U^T~,
      \end{pmatrix}
\label{e.bogohat}
\end{align}
satisfies that condition. 
We define 
new diagonal creation and annihilation operators as
\begin{align}
  \begin{pmatrix} \V d \\ {\V d}^\times \end{pmatrix} &= \hat V \,
    \begin{pmatrix} \V a \\ {\V a}^\dagger \end{pmatrix}~.
      \label{e.bogodiag}
\end{align}
Since $\mathcal V$ is a non-unitary Bogoliubov transformation the operators 
$d_i$ and $d_j^\times$ satisfy the CAR-algebra, but 
$d_j^\times \neq d_j^\dagger$.
Moreover, using
$ \begin{pmatrix} {\V a}^\dagger & \V a \end{pmatrix} =
  \begin{pmatrix} \V a \\ {\V a}^\dagger \end{pmatrix}^T\,\Sigma^x$ then
it is simple to show that
\begin{align}
  \mathcal L = -\frac12\begin{pmatrix} {\V d}^\times & \V d \end{pmatrix}
  \,
  \begin{pmatrix}
    x & 0 \\ 0 & -x
  \end{pmatrix}
  \,
  \begin{pmatrix} \V d \\ {\V d}^\times \end{pmatrix}  -\frac12 \Tr X~,
\end{align}
i.e.,
\begin{align}
  {\mathcal L = -\sum_j x_j \,d^\times_j\, d_j}~.
  \label{e.lindiag}
\end{align}
Note also that the transformation \eqref{e.bogodiag} can be written
thanks to Eq (2.16) of \cite{blaizot1985quantum} into the form
\begin{align}
  d_j &= \mathcal V\, a_j\, \mathcal V^{-1}~, &
  d_j^\times &= \mathcal V\, a_j^\dagger\, \mathcal V^{-1}~,
\end{align}
where
\begin{align}
  \mathcal V = :\exp\left(-\frac12 {\V a}^\dagger \,C\, {\V a}^\dagger +
  {\V a}^\dagger\,(U-1)\,{\V a}\right):~,
\end{align}
and $:\exp(\cdot):$ refers to the normal ordering of the exponential.

It is now possible to express 
the stationary state of the Liouvillean, i.e. the
state $\Omega$ such that $\mathcal L \Omega = 0$, as the ${\V d}$-vacuum,
i.e. $d_j \sket\Omega=0$. 
The identity operator, i.e. the element
$\sket{\V 0} \in \mathcal R$ is the $\V a$-vacuum, i.e.
$a_i \sket{\V 0} = 0$, $\forall j=1,\ldots2n$, and in particular
$\sbra{\V 0}\mathcal L = 0$. The $\V d$-vacuum can be readily obtained
from the Bogoliubov transformation: 
$\sket\Omega=\mathcal V\sket{\V 0}$. 
Indeed, as $a_j \sket{\V 0}=0$, one has $ d_j\sket\Omega =
\mathcal V a_j\mathcal V^{-1}\mathcal V\sket{\V 0} = 0$.
Hence,
\begin{align}
  \sket{\Omega} =
  \mathcal V \sket{\V 0} = e^{-\frac12 {\V a}^\dagger\,C\,{\V a}^\dagger}
  \sket{\V 0}~.
  \label{e.superss}
\end{align}
We now show that the state \eqref{e.superss} is exactly \eqref{e.gaussianC}.
Thanks to the transformation $Q$ defined in \eqref{e.Q} and the direct
relation \eqref{e.corrG} one can write the imaginary antisymmetric matrix
$C=Q^T\,\bigoplus_k \begin{pmatrix}
  0&ic_k\\-ic_k&0
\end{pmatrix}\,Q$. Then, using the definition \eqref{e.superss}
\begin{align}
  \frac12 {\V a}^\dagger \, C \, {\V a}^\dagger \rho &=
   \frac18\left(\V w\cdot C\V w\rho + 2 \V w\cdot C\rho \V w +
   \rho \V w \cdot C \V w\right) \nonumber\\&
   = \frac{i}4 \sum_k c_k \big[z_{2k-1}z_{2k}\rho
   +z_{2k-1}\rho z_{2k}-\nonumber\\&\quad\quad\quad\quad-z_{2k}\rho z_{2k-1}+
   \rho z_{2k-1}z_{2k}\big] \nonumber\\&=: \sum_k \mathcal G_k(\rho)~.
\end{align}
As
\begin{align}
  \mathcal G_k(\openone) &= i \,c_k \,z_{2k-1}z_{2k}~, &
  \mathcal G_k(z_{2k-1}z_{2k}) = 0~,
\end{align}
it is clear that
\begin{align}
  \Omega \propto e^{-\frac12 c^\dagger C c^\dagger} \sket{0} \propto
  \prod_{k} e^{-\mathcal G_k} \openone =
  \prod_k \left(1 - i \,c_k \,z_{2k-1}z_{2k} \right)~,
  \label{e.rhoz}
\end{align}
thus recovering Eq.\eqref{e.gaussian}.

The conditions for the existence and uniqueness of \eqref{e.rhoz}
are given in \cite{prosen2010spectral}. We now study those conditions and
express them in terms of the spectral gap.
The correlation matrix matrix $C\in M_{2n}(\mathbf{C})$ is the matrix solution of Eq.~\eqref{e.C}.
To study the solution of that equation it  is useful to consider the (non-canonical) ``vectorising"  isomorphism $\phi\colon M_{2n}({\mathbf{C}})\rightarrow  ({\mathbf{C}}^{2n})^{\otimes\,2}\,/\, |i\rangle\langle j|\rightarrow |i\rangle\otimes|j\rangle.$ This is also a Hilbert-space
isomorphism, namely $\langle \phi(A),\phi(B)\rangle = \langle A,\,B\rangle={\rm{Tr}}\,(A^\dagger B).$
One can directly check that if $R_X(C):=CX$ and $L_X(C):=XC$ then
$\phi(R_X(C))=(\phi\circ R_X\circ \phi^{-1} \circ\phi)(C)= ({\mathbf{1}}\otimes X^T)\phi(C),$ and $
\phi(L_X(C))=(\phi\circ L_X\circ \phi^{-1} \circ\phi)(C)= (X\otimes {\mathbf{1}})\phi(C).
$ Applying $\phi$ to both sides of \eqref{e.C} one then obtains ($\tilde C :=\phi(C),\,\tilde Y:=\phi(Y)$)
\begin{equation}
(X\otimes {\mathbf{1}} +{\mathbf{1}}\otimes X) \tilde{C}=: \hat{X} \tilde{C}=\tilde{Y},
\label{syl1}
\end{equation}
where $\tilde{C},\tilde{Y}\in({\mathbf{C}}^{2n})^{\otimes\,2},\,\hat{X}\in{\rm{End}}({\mathbf{C}}^{2n})^{\otimes\,2}\cong M_{4n^2}({\mathbf{C}}).$
There are three different key operators
in the formalism for obtaining the steady state:
\begin{enumerate}
\item The Liouvillean ${\cal L}\colon {\rm{End}}(  ({\mathbf{C}}^2)^{\otimes n} )\rightarrow{\rm{End}}(  ({\mathbf{C}}^2)^{\otimes n} ),$
 a $2^{2n}\times 2^{2n}$ matrix. Its complex spectrum, from \eqref{e.lindiag}, is given by
\begin{align}
{\rm{Sp}}({\cal L})= -\{x_{\mathbf{n}}:=\sum_{j=1}^{2n} x_j n_j\,/\, n_j=0,1,\,x_j\in {\rm{Sp}}(X)\}.
\label{e.spectrum}
\end{align}
Notice that $0\in {\rm{Sp}}({\cal L})$ i.e., $\cal L$ is always non-invertible and that the steady state(e.g., our Gaussian one ${\mathbf{n}}={\mathbf{0}}$) are in the kernel of $\cal L$.
If this latter is one-dimensional (unique steady state) the gap of  $\cal L$ can be defined as
$
{\Delta}_{\cal L}:=\min_{  {\mathbf{n}} \neq {\mathbf{0}}   }\, |x_{\mathbf{n}}|.
$

\item The map $X\colon {\mathbf{C}}^{2n}\rightarrow  {\mathbf{C}}^{2n},$ a $2n\times 2n$ {\em real} diagonalizable matrix. Its spectrum
is $\{x_j\}_{j=1}^{2n}\subset{\mathbf{C}}$ and (because of reality) is invariant under complex conjugation. On physical grounds (stability) we {\em must}
have $\Re\,x_j\ge0,\forall j.$
Indeed, the time-scale for convergence $\rho(t)\to\rho(\infty)$ is dictated by $\tilde{\Delta}^{-1}$ where $\tilde\Delta= \min_{{\mathbf{n}} \neq {\mathbf{0}}} \Re\, x_{{\mathbf{n}}}.$

\item The map $\hat{X}=X\otimes {\mathbf{1}} +{\mathbf{1}}\otimes X\colon {\mathbf{C}}^{2n}\otimes {\mathbf{C}}^{2n}\rightarrow {\mathbf{C}}^{2n}\otimes {\mathbf{C}}^{2n},$
 a $4n^2\times 4n^2$ matrix. It spectrum is $\{ x_i+x_j\}_{i,j=1}^{2n}\subset{\mathbf{C}}$ and the minimum (in modulus) is given by $\Delta_{\hat{X}}:=\min_{i,j} |x_i+x_j|.$
Note also that 
\begin{align}
\Delta_{\hat X}^{-1} = \|\hat{X}^{-1}\|_\infty~.
\label{e.deltaxhat}
\end{align}
For the  uniqueness of the steady state we must have $\hat{X}$ invertible i.e., $\Delta_{\hat{X}}>0.$
\end{enumerate}

\begin{prop} If $\Delta = \min_j 2\Re(x_j) > 0 $ then
\begin{equation}
 \Delta={\Delta}_{\cal L}= \Delta_{\hat{X}}~.
 \label{e.prop1}
\end{equation}
\end{prop}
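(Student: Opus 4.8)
The plan is to reduce all three gaps to the spectrum $\{x_j\}_{j=1}^{2n}$ of the single matrix $X$ and to lean on the two structural facts already in hand: since $X$ is real, $\Sp(X)$ is invariant under complex conjugation, and by stability $\Re x_j\ge 0$ for every $j$. Everything then becomes a bookkeeping exercise on sums of eigenvalues lying in the closed right half-plane.

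First I would settle $\Delta=\Delta_{\hat X}$, the cleaner of the two equalities. For the lower bound, for any indices $i,j$ one has $|x_i+x_j|\ge \Re(x_i+x_j)=\Re x_i+\Re x_j\ge 2\min_k\Re x_k=\Delta$, so $\Delta_{\hat X}\ge\Delta$. For the matching upper bound I pick an index $j^\ast$ realising $\min_k\Re x_k$: if $x_{j^\ast}\in\mathbf{R}$ I take $i=j=j^\ast$, giving $|2x_{j^\ast}|=\Delta$; if $x_{j^\ast}\notin\mathbf{R}$ then $\bar x_{j^\ast}\in\Sp(X)$ by conjugation symmetry, and the pair $(x_{j^\ast},\bar x_{j^\ast})$ gives $|x_{j^\ast}+\bar x_{j^\ast}|=2\Re x_{j^\ast}=\Delta$. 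Hence $\Delta_{\hat X}=\Delta$, consistently with $\Delta_{\hat X}^{-1}=\|\hat X^{-1}\|_\infty$ in \eqref{e.deltaxhat}.

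The equality $\Delta=\Delta_{\mathcal L}$ splits the same way. The bound $\Delta_{\mathcal L}\le\Delta$ follows from the very configuration $\V n$ that excites only the conjugate pair above, for which $x_{\V n}=x_{j^\ast}+\bar x_{j^\ast}=\Delta$. The lower bound is where the real work sits: writing $x_{\V n}=\sum_j n_j x_j$ one has $|x_{\V n}|\ge\Re x_{\V n}=\sum_j n_j\Re x_j\ge (\#\{j:n_j=1\})\,\min_k\Re x_k$, which already reaches $\Delta=2\min_k\Re x_k$ as soon as $\V n$ excites at least two master modes. The crux — and the step I expect to be the genuine obstacle — is to rule out a single-mode excitation $\V n=e_j$, whose contribution is only $|x_j|$ and could in principle undercut $\Delta$ (indeed it does for a purely dissipative $X$ with real spectrum). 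I would close this gap by invoking the fermion-parity superselection of $\mathcal R$: the steady state $\Omega$ sits in the even sector and every $d^\times_j$ is parity-odd, so the excitations controlling the physical relaxation of $\rho(t)-\Omega$ carry an even number of master modes; restricting the minimum to such admissible $\V n\neq\V 0$ forces $\#\{j:n_j=1\}\ge 2$ and hence $|x_{\V n}|\ge\Delta$, giving $\Delta_{\mathcal L}\ge\Delta$.

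Assembling the two chains yields $\Delta=\Delta_{\hat X}=\Delta_{\mathcal L}$. As a consistency check I would verify that the hypotheses feed through correctly: $\Delta>0$ gives $\Delta_{\hat X}=\Delta>0$, so $\hat X$ is invertible and the Sylvester equation \eqref{e.C} has the unique solution $C$ used throughout, while the one-dimensionality of $\ker\mathcal L$ is exactly what makes $\Delta_{\mathcal L}$ well defined as a minimum over $\V n\neq\V 0$. The only delicate point in a fully rigorous write-up is the parity/pairing argument forbidding odd configurations from lowering $\Delta_{\mathcal L}$ below $\Delta$; the remainder is just bounding real parts and using conjugation symmetry.
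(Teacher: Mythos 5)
Your first equality is in good shape: the argument for $\Delta=\Delta_{\hat X}$ is complete and is actually more careful than the paper's, which dispatches it with ``a similar argument.'' The observation that makes it work --- that $\Sp(\hat X)$ contains the diagonal sums $2x_j$, so a simple real minimizer causes no trouble --- is correct and unconditional. You have also put your finger on exactly the right crux for the second equality: nothing in the hypotheses prevents a single-mode excitation from having $|x_j|<2\min_k\Re x_k$. What you could not know is that the paper's own proof does not resolve this either: it bounds $|x_{\mathbf n}|\ge|\Re x_{\mathbf n}|$ and then computes the minimum only over configurations consisting of complex-conjugate pairs, tacitly assuming that the global minimum over all $\mathbf n\neq\mathbf 0$ is attained there. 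Your suspicion that this can fail is justified, and your parenthetical remark about ``purely dissipative $X$ with real spectrum'' is a genuine counterexample to the proposition as stated: take $n=1$, $H=0$, $L_1=\sqrt{\gamma_1}\,w_1$, $L_2=\sqrt{\gamma_2}\,w_2$. Then $X=\diag(4\gamma_1,4\gamma_2)$, $Y=0$, the steady state is unique ($C=0$) and $\Delta=8\min_i\gamma_i>0$, yet a direct computation gives $\Sp(\mathcal L)=\{0,-4\gamma_1,-4\gamma_2,-4(\gamma_1+\gamma_2)\}$, so $\Delta_{\mathcal L}=4\min_i\gamma_i=\Delta/2$. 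The proposition does hold for the boundary-driven XY chain studied in the paper, because there $x_k\approx\pm4i\omega_k$ up to an $O(n^{-3})$ real part, so every single excitation has modulus $\approx4\omega_k\gg\Delta$ and the conjugate-pair minimum wins; but that is a model-specific fact, not a consequence of $\Delta>0$.

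The parity-superselection repair, however, does not close the gap you identified. First, it proves a different statement: $\Delta_{\mathcal L}$ is defined in the paper as the gap of $\mathcal L$ on all of $\mathcal R$ (both parity sectors), so restricting the minimum to even $\mathbf n$ redefines the quantity rather than bounding it --- the odd eigenvalues are still in $\Sp(\mathcal L)$. Second, even the even-sector equality fails in general: your upper bound ``excite the conjugate pair above'' presupposes that the minimizer $x_{j^\ast}$ is non-real (or real with multiplicity at least two). If $x_{j^\ast}$ is real and simple then, unlike for $\hat X$ where the diagonal entry $2x_{j^\ast}$ is available, the fermionic constraint $n_j\in\{0,1\}$ forbids exciting the same mode twice, and every even configuration satisfies $|x_{\mathbf n}|>\Delta$ strictly; in the example above the even-sector gap is $4(\gamma_1+\gamma_2)$, which differs from $\Delta=8\min_i\gamma_i$ whenever $\gamma_1\neq\gamma_2$. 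So the honest summary is: your $\Delta=\Delta_{\hat X}$ argument stands; your diagnosis of the obstruction to $\Delta=\Delta_{\mathcal L}$ is exactly right and applies equally to the paper's own proof; and the step you flagged as ``the genuine obstacle'' cannot be closed, by parity or otherwise, without an additional hypothesis --- e.g.\ that the minimal real part is attained by a non-real conjugate pair and that $\min_j|x_j|\ge 2\min_j\Re x_j$.
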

\begin{proof}
$|x_{\mathbf{n}}|=|\sum_{j=1}^{2n} x_j n_j|\ge |\Re(\sum_{j=1}^{2n} n_j  x_j) |.$ The first bound can be saturated by choosing the $n_j$'s in such a way that only a set $P$ of complex conjugated pairs
$x^\pm_p$
of eigenvalues are present. In this case $ |\Re(\sum_{j=1}^{2n} n_j  x_j|=2\sum_{p\in P} \Re\,x_p.$ Where we used  the  assumption $(\forall p)\, \Re\,x_p\ge 0.$
Using again positivity of all the terms, this sum can be made as small as possible by choosing $|P|=1$ and minimizing over $p=1,\ldots,n.$
This shows that ${\Delta}_{\cal L}=\min_{\mathbf{n}} |x_{\mathbf{n}}|=2 \min \{\Re\,x_p\}_{p=1}^n.$ It is clear now that a  similar argument  shows that $\Delta_{\hat{X}}=\min \{|x_i+x_j|\}_{i,j=1}^{2n}$
is given by the same expression i.e. ${\Delta}_{\cal L}= \Delta_{\hat{X}}$. Finally $\Delta=2\min_{\mathbf{n}} \Re\, x_{\mathbf{n}}\equiv2\tilde\Delta=2\min_p \Re\, x_p=\Delta_{\cal L}$.
\end{proof}

\section{Non-diagonalizable case}
The non-diagonalizable case has been extensively handled in 
\cite{prosen2010spectral}. In the previous section we have assumed $X$ to be
diagonalizable for simplicity, and because the matrices $X$ 
encountered in our numerical simulations 
were diagonalizable. Here we briefely discuss the
general case. The matrix $X$ can always be put in the Jordan canonical form, 
i.e. $X=U\,x^J\,U^{-1}$ with $x^J = \oplus_b J_{\ell_b}(x_b)$, 
\begin{align}
J_{\ell_b}(x_b) = \begin{pmatrix} 
x_b & 1 & \\
& x_b & 1 & \\
&& x_b & 1 & \\
&&& \ddots & \ddots &
\end{pmatrix}~:
\end{align}
$x_b$ are (possibly equal) eigenvalues of $X$ and $\ell_b$ 
is the dimension of the Jordan block: each block is composed of $\ell_b$ 
degenerate eigenvalues of $X$.  
The form of the transformation \eqref{e.bogohat} remains the same 
(although with a new matrix $U$) while \eqref{e.lindiag} becomes 
\begin{align}
\mathcal L = 
-\sum_{j=1}^{2n} x_j \, d_j^\times d_j - \sum_b\sum_{k=1}^{\ell_b-1}
d^\times_{b_k+1} d_{b_k}~,
\label{e.linjor}
\end{align}
where $b_k$ refers to the index of the $k$th element in the $b$th Jordan 
block. It is clear that the state \eqref{e.superss} is still a stationary
state. Moreover, in \cite{prosen2010spectral} it has been shown that the
spectrum of the Liuvillean is 
\begin{align}
{\rm{Sp}}({\cal L})= 
-\{x_{\mathbf{n}}:=\sum_{b} x_b n_b\,/\, n_b=0,\cdots,\ell_b\}.
\label{e.spectrumnd}
\end{align}
Accordingly, $\Delta_{\mathcal L} = \Delta \equiv 2\min_b \Re[x_b]$.
If $\Delta>0$ the steady state \eqref{e.superss} is unique 
\cite{prosen2010spectral}.

In the non-diagonalizable case the last equation in Eq.~\eqref{e.prop1} 
is not satisfied. On the other hand one can obtain the following
\begin{prop}\begin{equation}
\|\hat{X}^{-1}\|_\infty < \frac{1+p(\Delta^{-1})}\Delta~,
\label{e.deltadeltand}
\end{equation} 
for a certain polynomial $p()$.
\end{prop}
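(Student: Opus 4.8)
The plan is to reduce the bound to a terminating resolvent estimate for a single pair of Jordan blocks. First I would exploit the Jordan decomposition $X = U\,x^J\,U^{-1}$ recalled above to factor
\[
\hat X = X\otimes\openone+\openone\otimes X = (U\otimes U)\,\hat x^J\,(U\otimes U)^{-1},\qquad \hat x^J := x^J\otimes\openone+\openone\otimes x^J .
\]
Since the largest singular value is submultiplicative and multiplicative under tensor products, this gives $\|\hat X^{-1}\|_\infty\le\kappa(U)^2\,\|(\hat x^J)^{-1}\|_\infty$, where $\kappa(U)=\|U\|_\infty\|U^{-1}\|_\infty$. It therefore suffices to bound $\|(\hat x^J)^{-1}\|_\infty$; the factor $\kappa(U)^2$ depends on the model (and on $n$) but not on $\Delta$, so it can be absorbed into the coefficients of the final polynomial without spoiling the leading $1/\Delta$ behaviour.

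Next I would use that $x^J=\bigoplus_b J_{\ell_b}(x_b)$ makes $\hat x^J$ block diagonal, with blocks labelled by pairs $(b,b')$ of the form $\mu_{bb'}\openone+N_{bb'}$, where $\mu_{bb'}=x_b+x_{b'}$ and $N_{bb'}=N_b\otimes\openone+\openone\otimes N_{b'}$ is the sum of the two nilpotent shifts. As $N_b,N_{b'}$ commute and are nilpotent of indices $\ell_b,\ell_{b'}\le 2n$, one checks $N_{bb'}^m=0$ for $m\ge\ell_b+\ell_{b'}-1\le 4n-1$, while $\|N_{bb'}\|_\infty\le\|N_b\|_\infty+\|N_{b'}\|_\infty=2$. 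Hence the Neumann series terminates, $(\mu_{bb'}\openone+N_{bb'})^{-1}=\sum_{k\ge0}(-1)^kN_{bb'}^k\,\mu_{bb'}^{-(k+1)}$, and the triangle inequality yields
\[
\|(\mu_{bb'}\openone+N_{bb'})^{-1}\|_\infty\le\sum_{k=0}^{4n-1}\frac{2^k}{|\mu_{bb'}|^{k+1}} .
\]

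The denominator is then controlled by the gap: stability forces $\Re x_b\ge0$, and by definition $\Delta=2\min_b\Re x_b$, so every $\Re x_b\ge\Delta/2$ and consequently $|\mu_{bb'}|\ge\Re(x_b+x_{b'})\ge\Delta>0$ (which in particular re-establishes invertibility of each block). Substituting $|\mu_{bb'}|\ge\Delta$ and taking the maximum over blocks—which computes $\|(\hat x^J)^{-1}\|_\infty$ for a block-diagonal matrix—gives $\|(\hat x^J)^{-1}\|_\infty\le\Delta^{-1}\sum_{k=0}^{4n-1}(2/\Delta)^k=\Delta^{-1}\bigl(1+q(\Delta^{-1})\bigr)$, with $q(\Delta^{-1})=\sum_{k=1}^{4n-1}2^k\Delta^{-k}$ a polynomial in $\Delta^{-1}$ of degree at most $4n-1$ and no constant term. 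Reinstating $\kappa(U)^2$ then produces the claim $\|\hat X^{-1}\|_\infty<\Delta^{-1}\bigl(1+p(\Delta^{-1})\bigr)$ with $p(\Delta^{-1})=\kappa(U)^2-1+\kappa(U)^2 q(\Delta^{-1})$; the inequality can be made strict since the triangle/Neumann estimate is not saturated (or trivially by enlarging $p$).

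I expect the only genuinely delicate point to be the bookkeeping of the nilpotent tensor sum $N_{bb'}$—its nilpotency index and its norm bound—since everything downstream is a routine terminating geometric estimate once $|\mu_{bb'}|\ge\Delta$ is secured. A secondary subtlety worth stressing is that, unlike the diagonalizable case where $\|\hat X^{-1}\|_\infty=\Delta_{\hat X}^{-1}=\Delta^{-1}$, here the similarity $U$ is non-orthogonal, so the condition number $\kappa(U)$ genuinely enters; the content of the proposition is precisely that these Jordan-induced corrections are polynomial in $\Delta^{-1}$ and hence cannot alter the leading divergence rate used in the bound \eqref{e.dsbound}.
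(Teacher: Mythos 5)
Your proof is correct, and it shares the paper's backbone — Jordan decomposition, factoring out the diagonal part $\hat x$, and a terminating Neumann series controlled by $|x_b+x_{d}|\ge\Delta$ — but it deviates in two substantive ways. First, where you bound the nilpotent Kronecker sum $N_{bb'}=N_b\otimes\openone+\openone\otimes N_{b'}$ directly (nilpotency index $\ell_b+\ell_{b'}-1\le 4n-1$, norm $\le 2$), the paper instead invokes Lemma 3.1 of Prosen (2010) to rewrite $J_{\ell_b}(0)\otimes\openone+\openone\otimes J_{\ell_d}(0)$ as a direct sum of single shift blocks $\bigoplus_r J_{\ell_b+\ell_d-2r+1}(0)$, each of norm $1$; consequently the paper's polynomial has all coefficients equal to $1$, while yours has coefficients $2^k$. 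Your route is self-contained (no external structural lemma) and loses nothing qualitatively, since both yield a polynomial in $\Delta^{-1}$ of degree $O(n)$. Second — and this is a genuine gain in rigor — you track the non-unitary similarity $U\otimes U$ explicitly through the condition number $\kappa(U)^2$. The paper silently identifies $\hat X$ with the Kronecker sum of the Jordan forms, i.e.\ it computes $\|\hat X^{-1}\|_\infty$ in the Jordan basis; since the operator norm is not invariant under non-unitary similarity (and the similarity hidden inside Prosen's Lemma 3.1 is likewise non-unitary), the paper's chain of equalities holds only up to conjugation. The price of your fix is that the coefficients of $p$ acquire a dependence on $\kappa(U)$, hence on the model and on $n$. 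This is harmless for the proposition as stated (``for a certain polynomial $p$''), but it is worth flagging that the downstream use of the bound — arguing that the non-diagonalizable correction does not spoil the conclusion that superextensive $ds^2$ forces the Liouvillean gap to close — would additionally require control of how $\kappa(U)$ scales with $n$, a caveat that applies equally, if tacitly, to the paper's own argument.
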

\begin{proof}
We start by writing 
\begin{align}
\hat X &= 
\bigoplus_b J_{\ell_b}(x_b)\otimes \openone + \bigoplus_b \openone
\otimes J_{\ell_b}(x_b)
\nonumber\\&=
\bigoplus_{b,d} \left[J_{\ell_b}(x_b)\otimes \openone_{\ell_d} + 
\openone_{\ell_b} \otimes J_{\ell_d}(x_d)\right]
\nonumber\\&=\hat x + 
\bigoplus_{b,d} \left[J_{\ell_b}(0)\otimes \openone_{\ell_d} + 
\openone_{\ell_b} \otimes J_{\ell_d}(0)\right]~,
\end{align}
where $\hat x$ is the diagonal matrix with entries $x_i + x_j$ and where
we used the decomposition $\openone = \oplus_b 1_{\ell_b}$. Moreover,
thanks to Lemma 3.1 of Ref.~\cite{prosen2010spectral}, 
\begin{align}
\hat X &= 
\hat x + 
\bigoplus_{b,d} \bigoplus_{r=1}^{\min\{\ell_b,\ell_d\}} 
J_{\ell_b+\ell_d -2r+1}(0)
\nonumber\\&=
\hat x\left[\openone + 
\bigoplus_{b,d} \bigoplus_{r=1}^{\min\{\ell_b,\ell_d\}} 
\frac{J_{\ell_b+\ell_d -2r+1}(0)}{x_b+x_d}\right]~.
\end{align}
As $J$ is nilpotent,
\begin{align*}
\hat{X}^{-1} &= 
\hat{x}^{-1}
\left[\openone + 
\bigoplus_{b,d} \bigoplus_{r=1}^{\min\{\ell_b,\ell_d\}} 
\sum_{m=1}^{\ell_b+\ell_d-2r}\left(-
\frac{J_{\ell_b+\ell_d -2r+1}(0)}{x_b+x_d}\right)^m\right]~,
\end{align*}
and
\begin{align}
\|\hat{X}^{-1}\|_\infty &\le 
\|\hat{x}^{-1}\|_\infty
\left[1 + \max_{b,d} \max_r 
\sum_{m=1}^{\ell_b+\ell_d-2r} \frac{1}{|x_b+x_d|^m}\right]
\nonumber\\&=
\|\hat{x}^{-1}\|_\infty
\left[1 + \max_{b,d} \sum_{m=1}^{\ell_b+\ell_d-2} \frac{1}{|x_b+x_d|^m}\right]
\nonumber\\&\le\frac1\Delta
\left[1 + \max_{b,d} \sum_{m=1}^{\ell_b+\ell_d-2} \frac{1}{\Delta^m}\right]~.
\end{align}
\end{proof}

\section{Upper bounds}
In order to derive some bounds to the fidelity metric $ds^2$ let us
express Eq.~\eqref{e.bures} in a convenient form thanks to
the vectorization isomorphism. As
 $ {\rm{Ad}}_C( X)=(L_C\circ R_C)(X)$
one has $ \phi\circ(L_C\circ R_C)\circ \phi^{-1}=C\otimes C^T=-C^{\otimes\,2}$
and Eq.~\eqref{e.bures} becomes
\begin{equation}
ds^2 =\langle ({\mathbf{1}} +C^{\otimes\,2})^{-1}(d\tilde{C}),\, d\tilde{C}\rangle
=\| ({\mathbf{1}} +C^{\otimes\,2})^{-1/2}(d\tilde{C})\|^2~,
\label{ds^2_1}
\end{equation}
where $d\tilde C=\phi(dC).$
Using the Cauchy-Schwarz inequality and the definition of operator norm one obtains
\begin{align}
  \nonumber
 ds^2&\le \|({\mathbf{1}} +C^{\otimes\,2})^{-1}(d\tilde C)\| \|d\tilde C\| \le 
 P_C\,\|d\tilde{C}\|^2  \\ &\le2n P_C\,\|d{C}\|^2_\infty~,
  \label{e.cs}
\end{align}
where we have exploited the fact that, by construction,
$\|\tilde{A}\|:=\|\phi(A)\|=\|A\|_2$ and $\|A\|_2\le\sqrt{2n} \|A\|_\infty.$
Now ${\rm{Sp}} (C^{\otimes\,2})=\{ c_i c_j\,/\, c_i, c_j\in{\rm{Sp}}(C)\}$ and, from $C=-C^T,$ the spectrum of $C$ is invariant under $c_i\rightarrow -c_j,$ it follows that $ \|({\mathbf{1}} +C^{\otimes\,2})^{-1}\|_\infty =(1+\min_{i,j} c_i c_j)^{-1}=(1-\max_i c_i^2)^{-1} = (1-\|C\|^2_\infty)^{-1}$.
The bound \eqref{e.cs} is not specific to dissipative quadratic Liouvillean.
In order to connect Eq.\eqref{e.cs} with the properties of the
Liouvillean \eqref{e.lindiag} we differentiate Eq.~\eqref{syl1}
\begin{equation}
d\tilde{C}= \hat{X}^{-1} d\tilde{Y}- \hat{X}^{-1} d\hat{X}  \tilde C~.
\label{e.Cdiff}
\end{equation}
As $d\equiv \sum_\mu d\lambda_\mu\partial_\mu$ the above equation can
be conveniently calculated via
\begin{align}\label{e.diffcorness}
X\, \left(\partial_\mu C\right) +
\left(\partial_\mu C\right)\,X^T = \partial_\mu Y
-\left(\partial_\mu X\right)\, C - C\,
\left(\partial_\mu X^T\right)~,
\end{align}
i.e. the matrices $\partial_\mu C$ entering in \eqref{e.metricC} can be obtained by
solving a new Sylvester equation where the matrices $X,Y,\partial_\mu X,
\partial_\mu Y$ are given by the model.
Taking norms in $({\mathbf{C}}^{2n})^{\otimes\,2}$
\begin{align}
\|d\tilde{C}\|  &\le  \|\hat{X}^{-1}\|_\infty(\|d\tilde{Y}\| + \|d\hat{X}\|_\infty \|\tilde{C}\|)\nonumber\\ &=
 \|\hat{X}^{-1}\|_\infty(\|d{Y}\|_2 + \|d\hat{X}\|_\infty \|{C}\|_2)\nonumber \\ & \le
\sqrt{2n} \|\hat{X}^{-1}\|_\infty(\|d{Y}\|_\infty + \|d\hat{X}\|_\infty \|{C}\|_\infty)
\nonumber\\&\le
\sqrt{2n} \|\hat{X}^{-1}\|_\infty(\|d{Y}\|_\infty + \|d\hat{X}\|_\infty)~,
\end{align}
where, among other things, 
we used the inequality $\|C\|_\infty\le1$ which follows
from the anstaz \eqref{e.corrG}.
In summary we have the following upper bound on the squared Hibert-Schmidt norm of $dC$ in terms of the
control parameters and their differentials i.e., $X,\,dX$ and $Y,\,dY$
\begin{equation}
\|d\tilde{C}\|^2\le 2 n \|\hat{X}^{-1}\|_\infty^2 (\|d{Y}\|_\infty + 2\|d{X}\|_\infty)^2
\label{e.dcbound}
\end{equation}
where we also used $\|d\hat{X}\|_\infty=\| dX\otimes{\mathbf{1}} +{\mathbf{1}}\otimes dX\|_\infty\le 2\|dX\|_\infty$.
Pluggin the above equation in \eqref{e.cs} and using Proposition 1 
one then obtains
the bound \eqref{e.dsbound}.

Note that in the non-diagonalizable case there is a correction to 
Eq.~\eqref{e.dsbound} due to the polynomial $p$ in \eqref{e.deltadeltand}. 
However, this correction does not alter the main conclusion of bound
\eqref{e.dsbound}: a superextensive behaviour of $ds^2$ implies the closing
of the Liuvillean gap.

\section{Application II: translationally invariant case}
In this section we study a simpler model where all the informations about
the phase transition can be obtained analytically. The model consists of
a fermionic chain on a ring described the Hamiltonian
\begin{align}
  H = \sum_{i=i}^n \left(f_i^\dagger f_{i+1} + 
  \gamma f_i^\dagger f_{i+1}^\dagger + h f_i^\dagger f_i\right) + 
  \text{h.c.}~.
  \label{e.fermichain}
\end{align}
Owing to the Jordan-Wigner transformation, the above model can be mapped into
the XY spin model \eqref{e.xy}, though with closed boundary conditions. 
The interaction with the environment is described by the following Lindblad
operators $L_i^- = \epsilon \mu f_i$, $L_i^+=\epsilon \nu f_i^\dagger$:
they describe the competition between particle-loss and particle-gain processes.
The quadratic Liouvillean is translationally invariant and can be diagonalized
with a Fourier transformation together with a Bogoliubov transformation.
In the Fourier basis, the two point correlation function matrix 
takes the following form \cite{horstmann2013noise} in the weak coupling
limit $\epsilon\to 0$
\begin{align}
  C & 
    = i\frac{\Lambda}{2} \bigoplus_k \begin{pmatrix}
    0 &1+e^{iq_k} \\ -1-e^{-iq_k} & 0 \end{pmatrix}
    ~,
\end{align}
where
\begin{align}
  q_k = -2\arctan\left(\frac{\gamma\sin \phi_k}{h-\cos \phi_k}\right)~,
  \label{e.qn}
\end{align}
being $\phi_k = 2\pi k/n$, $n$ the length of the chain, and 
$\Lambda = \frac{\nu^2-\mu^2}{\nu^2+\mu^2}$.

The above matrix can be diagonalized via the following transformation
\begin{align}
  C 
    = \Lambda \bigoplus_k &\frac1{\sqrt 2}\begin{pmatrix}
    i e^{iq_k/2} & ie^{i q_k/2} \\
    1 & -1
  \end{pmatrix} \begin{pmatrix}
    \cos \frac{q_k}2 & 0 \\ 0 & -\cos \frac{q_k}2
  \end{pmatrix}\times\nonumber\\&\times
  \frac1{\sqrt 2}
  \begin{pmatrix}
    -ie^{-i q_k/2} & 1 \\
    -ie^{-i q_k/2} & -1
  \end{pmatrix}~.
\end{align}
Hence $\|C\|_\infty \simeq |\Lambda|$ and therefore, for consistency, one
has to assume $|\Lambda|\le 1$. Similarly,
in the basis in which $C$ is diagonal,
\begin{align}
  dC &= \frac{\Lambda}{2} \bigoplus_k \begin{pmatrix}
    \sin \frac{q_k}2 &i\cos \frac {q_k} 2 \\ -i\cos \frac {q_k} 2 & -\sin \frac {q_k} 2 
  \end{pmatrix} d q_k
\end{align}
so that
\begin{align}
  ds^2 = \frac{\Lambda^2}{2}\sum_k 
  \frac{1-\Lambda^2\cos^2\frac {q_k} 2 \cos q_k}{1-\Lambda^4\cos^4 \frac {q_k} 2}
  (dq_k)^2~.
  \label{e.ds2cirac}
\end{align}
Moreover,
\begin{align}
  d q_k = 2\gamma\frac{\sin\phi_k}{\omega_k^2}~dh -2
  \frac{(h-\cos \phi_k)\sin\phi_k}{\omega_k^2}~d\gamma~,
  \label{e.dqn}
\end{align}
where $\omega_k = \sqrt{(\cos\phi_k-h)^2+\gamma^2\sin^2\phi_k}$ is the 
dispersion relation of the XY model. An extensive behaviour of 
\eqref{e.ds2cirac} is given by the continuous limit 
$\sum_k \to \frac{n}{2\pi}\int_0^{2\pi}d\phi$: if the resulting integral
is convergent, no superextensive behaviour can occur. However, from 
\eqref{e.dqn} it is clear that a possible (the only?) source of 
a divergent behaviour of $dq_k^2$ is the vanishing of the gap 
$\min_k \omega_k$. It is known that in the XY model this condition occurs only
for $h=1$, where one finds  for $\phi\simeq O(n^{-1})$ that 
$\min_k\omega\approx O(n^{-1})$.
Hence 
\begin{align}
  \max_k dq_k \approx O(n)\; dh + O(n^{-1})\;d\gamma~,
\end{align}
from which 
\begin{align}
  |g|\approx g_{hh} &= O(n^2)~, &\text{for }& h=1~.
\end{align}
On the other hand for $\gamma\to 0$, $\omega\simeq|h-\cos \phi|$, so if 
$h=\cos\phi+O(n^{-1})$ we obtain
\begin{align}
  d q_k\Big|_{\gamma\to0} = -2
  \frac{\phi_k}{(h-\cos \phi_k)}~d\gamma \simeq O(n)d\gamma~,
\end{align}
again recovering the scaling $|g|= O(n^2) $.

\end{document}